\newcommand{\raisecaption}{\vspace{-0.5cm}}
\newcommand{\Ob}[1]{O\!\left( #1 \right)}
\newcommand{\os}[1]{o\!\left( #1 \right)}
\newcommand{\Om}[1]{\Omega\!\left( #1 \right)}
\newcommand{\om}[1]{\omega\!\left( #1 \right)}
\newcommand{\Th}[1]{\Theta\!\left( #1 \right)}
\newcommand{\eqref}[1]{{\rm(\ref{#1})}}
\newenvironment{smallenumerate}
   {\vspace*{-4pt}
    \begin{enumerate}\itemsep=0pt}
   {\end{enumerate}
    \vspace*{-6pt}}
\newcommand{\algskip}{\itemsep=-3pt\baselineskip=11pt}
\newcommand{\eos}{\$}
\newcommand{\sa}{\mbox{{\sf sa}}}
\newcommand{\saint}{\mbox{\sf sa$_{int}$}}
\newcommand{\saext}{\mbox{\sf sa$_{ext}$}}
\newcommand{\bwtext}{\mbox{\sf bwt$_{ext}$}}
\newcommand{\bwtint}{\mbox{\sf bwt$_{int}$}}
\newcommand{\psiext}{\mbox{$\Psi_{ext}$}}
\newcommand{\pos}{\mbox{\sf pos}}
\newcommand{\bits}{\mbox{\sf gt}}
\newcommand{\sort}{\mbox{\sf sort}}
\newcommand{\bwt}{\mbox{\sf bwt}}
\newcommand{\bwtMark}{\mbox{\sf bwtMark}}
\newcommand{\gap}{\mbox{\sf gap}}
\newcommand{\A}{\Sigma}
\newcommand{\Rank}{\mbox{\sf Rank}}
\newcommand{\bwte}{\mbox{\sf bwt-disk}}
\newcommand{\dcx}{\mbox{\sf DC3}}
\newcommand{\bzip}{{\sf bzip}}
\newcommand{\ppm}{{\sf ppm}}
\newcommand{\rle}{\mbox{\sf Rle}}
\newcommand{\gzip}{\mbox{\sf gzip}}
\newcommand{\Set}{{\cal S}}
\begin{document}

\title{Lightweight Data Indexing and Compression\\in
External Memory\thanks{The first author has been partially supported by {\sf
Yahoo! Research}. The second and third authors have been supported by Italian
MIUR grant ``Italy-Israel FIRB Project". This research was done while the second author was at the Universit\`a del Piemonte Orientale, Italy.  Emails: {\tt ferragina@di.unipi.it},
{\tt manzini@mfn.unipmn.it}, {\tt travis.gagie@gmail.com}}}
\author{Paolo Ferragina\inst{1} \and Travis Gagie\inst{2} \and Giovanni Manzini\inst{3}}

\institute{$^1$\ Dipartimento di Informatica, Universit\`a di Pisa, Italy\\
$^2$\ Departamento de Ciencias de Computaci\'on, Universidad de Chile, Chile\\
$^3$\ Dipartimento di Informatica, Universit\`a del Piemonte Orientale,
Italy}

\maketitle

\begin{abstract}
In this paper we describe algorithms for computing the BWT and for building
(compressed) indexes in external memory. The innovative feature of our
algorithms is that they are lightweight in the sense that, for an input of
size $n$, they use only ${n}$ bits of disk working space while all previous
approaches use $\Th{n \log n}$ bits of disk working space. Moreover, our
algorithms access disk data only via sequential scans, thus they take full
advantage of modern disk features that make sequential disk accesses much
faster than random accesses.

We also present a scan-based algorithm for inverting the BWT that uses
$\Th{n}$ bits of working space, and a lightweight {\em internal-memory}
algorithm for computing the BWT which is the fastest in the literature when
the available working space is $\os{n}$ bits.

Finally, we prove {\em lower} bounds on the complexity of computing and
inverting the BWT via sequential scans in terms of the classic product:
internal-memory space $\times$ number of passes over the disk data.
\end{abstract}

\vspace{-.5cm}
\section{Introduction}\label{sec:intro}

\vspace{-.3cm} Full-text indexes are data structures that index a text string
$T[1,n]$ to support subsequent searches for arbitrarily long patterns like
substrings, regexp, errors, {\it etc.}, and have many applications in
computational biology and data mining. Recent years have seen a renewed
interest in these data structures since it has been proved that full-text
indexes can be compressed up to the $k$-th order empirical entropy of the
input text $T$, and searched without being fully
decompressed~\cite{NM-survey07}. At the same time, it has been shown that
modern data compressors based on full-text indexes can approach the empirical
entropy of an input string without making any assumption about its generating
source~\cite{FGMS05}. Clearly, data compression and indexing are mandatory when
the data to be processed and/or transmitted has large size.  But larger data
means more memory levels involved in their storage and hence, more costly
memory references. It is already known how to design an optimal
external-memory (uncompressed) full-text index~\cite{Fer05ssem}, and some
results on external memory compressed indexes have recently appeared in the
literature~\cite{BenderFK06,GeomBWT,GNiwoca07,SPMT08}. However, whichever is
the index chosen (compressed or uncompressed), to use it one must first {\em
build} it! The sheer size of data available nowadays for mining and search
applications has turned this into a hot topic because the
construction/compression phase may be a bottleneck that can even prevent
these indexing and compression tools from being used in large-scale
applications.

Recent research~\cite{FrMu07,Kar07smallbwt,NaPa07} has highlighted that a
major issue in the construction of such data structures is the large amount
of {\it working space} usually needed for the construction. Here working
space is defined as the space required by an algorithm in addition to the
space required for the input (the text to be indexed/compressed) and the
output (the index or the compressed file). If the data to be indexed is too
large to fit in main memory one must resort to external memory construction
algorithms. Such algorithms are known (see e.g. \cite{DKMS08,KSB06}), but
they all use $\Theta(n \log n)$ bits of working space. We found (see
Section~\ref{sec:bwt:construction}) that this working space can be up to 500
times larger than the final size of the compressed output that, for typical
data, is three to five times smaller than the original input and is anyway
$\Ob{n}$ bits in the worst case.

Given these premises, the first issue we address in this paper is the design
of construction algorithms for full-text indexes which work on a disk-memory
system and are {\em lightweight} in that their working space is as small as
possible. The second issue we address concerns the way our algorithms
fetch/write data onto disk: we design them to access disk data only via {\it
sequential} scans. This approach is motivated by the well known fact that
sequential I/Os are much faster than random I/Os. Indeed, on modern disks
sequential disk access rates are currently comparable to random access rates
in internal memory~\cite{Ruhl03}. Sequential access to data has the
additional advantage of using modern caching architectures optimally, making
the algorithm cache-oblivious. These facts are routinely exploited by expert
programmers, and have motivated a large body of research, known as {\em Data
Streaming}~\cite{Muthu-survey}. In this paper we investigate the problems of
building (compressed) full-text indexes and compressing data using only
sequential scans (i.e. streaming-like). We provide {\em upper} and {\em
lower} bounds for them in terms of the product ``internal-memory space
$\times$ passes over the disk data''.

In the following we consider the classical I/O model~\cite{Vitter01}: a fast
internal memory with $M$ words (i.e. $\Theta(M \log n)$ bits) and $O(1)$
disks of unbounded capacity. Disks are organized in pages consisting of $B$
consecutive words (i.e. $\Theta(B \log n)$ bits overall). Since our
algorithms access disk data only by sequential scans, we analyze them
counting the number of disk passes as in the streaming models: From that
number is straightforward also to derive the cost in terms of the number of
I/Os (disk page accesses).

Our first contribution is a lightweight algorithm for computing the BWT --- a
basic ingredient of both compressors and compressed indexes --- in $O(n/M)$
passes and $n$ bits of disk working space. Note that the total space usage of
the algorithm is $\Th{n}$ bits and therefore proportional to the size of the
input. Since at each pass we scan $\Th{n}$ bits of disk data, each pass scans
$\Th{n/(B \log n)}$ pages and the overall I/O complexity is $\Ob{n^2/(MB\log
n)}$. We have implemented a prototype of this algorithm (available from {\sf
http://people.unipmn.it/manzini/bwtdisk}). The prototype takes advantage of
the sequential disk access by storing all files (input, output, and
intermediate) in compressed form, thus further reducing the disk usage and
the total I/Os. Our tests show that our tool is the fastest currently
available for the computation of the BWT in external memory, and that its
disk working space is much smaller than the size of the input.

The second contribution of the paper is to show that from our algorithm we
can derive: {\bf (1)} a lightweight {\it internal-memory} algorithm for
computing the BWT, which is the fastest in the literature when the amount of
available working space is $\os{n}$ bits (Theorem~\ref{teo:bwt}), and {\bf
(2)} lightweight algorithms for computing: the suffix array, the $\Psi$
array, and a sampling of the suffix array, which are important ingredients of
(compressed) indexes (see Theorems \ref{teo:newCF}, \ref{teo:psi}, and
\ref{teo:posd}).

Another contribution is a lightweight algorithm to invert the BWT which uses
$O(n/M)$ passes with one disk or $O(\log^2 n)$ passes with two disks, and
$\Th{n}$ bits of disk working space (Theorem \ref{teo:bwtInv}). This result
is based on different techniques than the ones we used to derive our
construction algorithms.

Finally, we try to assess to what extent we can improve our scan-based
algorithms for computing/inverting the BWT with only one disk. In this
setting, lower bounds are often established considering the product
``internal-memory space $\times$ passes''~\cite{MunroP80}. For our BWT
construction and inversion algorithms such product is $\Ob{n \log n}$ bits
and, by strengthening a lower bound from one of our previous
papers~\cite{Gag09}, we prove that we cannot reduce it to $\os{n}$ bits with
a scan-based algorithm using a single disk (Theorem \ref{teo:LBs}). Hence our
algorithms are within an $O(\log n)$ factor of the optimal.  We note that our
lower bound is ``best possible'' because, if we have $\Omega (n)$ bits of
memory, then we can read the input into internal memory with one pass over
the disk and then compute the BWT there.

\bigbreak\noindent {\bf Related results.} As we mentioned above, the problem
of the lightweight computation of (compressed) indexes in internal memory has
recently received much attention
(see~\cite{FrMu07,Hon07,Kar07smallbwt,NaPa07,Siren09} and references
therein). However, all the proposed algorithms perform many random
memory-accesses so they cannot be easily transformed into external memory
algorithms. To our knowledge no lightweight algorithms specific for external
memory are known. The construction of most full-text indexes reduces to
suffix-array construction, which in turn needs $\log n$ recursive
sorting-levels~\cite{FFM:00}. In external memory this sort-based approach
takes $O\bigl({\frac{n}{B}\log_{M/B} \frac{n}{B}}\bigr)$
I/Os~\cite{Fer05ssem} and is faster than our algorithms when $M = O
\bigl({{n}/\bigl({\log n \log_{M/B}\frac{n}{B}}\bigr)}\bigr)$. However, the
sort-based approach is not lightweight since it uses $\Th{n\log n}$ bits of
disk working space.

\vspace{-.2cm}

\section{Notation}\label{sec:notation}

\vspace{-.2cm} We briefly recall some definitions related to compressed
full-text indexes; for further details see~\cite{NM-survey07}. Let $T[1,n]$
denote a text drawn from a constant size alphabet~$\A$. As is usual, we
assume that $T[n]$ is a character not appearing elsewhere in $T$ and is
lexicographically smaller than all other characters. Given two strings $s$,
$t$ we write $s\prec t$ to denote that $s$ precedes $t$ lexicographically.
The suffix array $\sa[1,n]$ is the permutation of $[1,n]$ giving the
lexicographic order of the suffixes of $T$, that is $T[\sa[i],n] \prec
T[\sa[i+1],n]$ for $i=1,\ldots,n-1$. The inverse of the \sa\ is the \pos\
array, such that $\pos[i]$ is the rank of suffix $T[i,n]$ in the suffix
array. This way, $\sa[\pos[i]] = i$. We denote by $\pos_d$ the set of $(n/d)$
values $\pos[d], \pos[2d], \ldots, \pos[n]$ that indicate the distribution of
the positions of the $d$-spaced suffixes within $\sa$.

The Burrows-Wheeler transform is an array of characters $\bwt[1,n]$ defined
as $\bwt[i] = T[(\sa[i]-1) \bmod n]$. The array $\Psi[1,n]$ is the
permutation of $[1,n]$ such that $\sa[\Psi(i)] = \sa[i] + 1 \bmod n$. The
value $\Psi[i]$ is the lexicographic rank of the suffix which is one
character shorter than the suffix of rank~$i$. The basic ingredients of most
compressed indexes are either the \bwt\ or the $\Psi$ array, optionally
combined with the set $\pos_d$ for some $d=\Om{\log n}$. In this paper we
describe external memory lightweight algorithms for the computation of all
these three basic ingredients.

\vspace{-.2cm}

\section{Lightweight Scan-Based BWT construction}
\label{sec:bwt:construction}

\vspace{-.2cm} In this section we describe the algorithm \bwte\ for the
computation of the \bwt\ of a text $T[1,n]$ when $n$ is so large that the
computation cannot be done in internal memory. Our algorithm is lightweight
in the sense that it uses only $M$ words of RAM and $n$ bits of disk space
--- in addition to the disk space used for the input $T[1,n]$ and the output
$\bwt(T[1,n])$. Our algorithm is scan-based in the sense that all data on
disk is accessed by sequential scans only. Note that in the description below
our algorithm scans the input file right-to-left: in the actual
implementation we scan the input rightward which means that we compute the
\bwt\ of $T$ reversed. The
\bwte\ algorithm is an evolution of a disk-based construction algorithm for
suffix arrays first proposed in~\cite{pat-tree} and improved
in~\cite{CraFer02}. However, our algorithm constructs the \bwt\ {\em
directly} without passing through the \sa\ and uses some new ideas to reduce
the working space from $\Theta(n \log n)$ to $n$ bits.

The algorithm \bwte\ logically partitions the input text $T[1,n]$ into blocks
of size $m = \Theta(M)$ characters each, i.e. $T = T_{n/m} T_{n/m -1} \cdots
T_2 T_1$, and computes incrementally the \bwt\ of $T$ via $n/m$ passes, one
per block of~$T$. Text blocks are examined right to left so that at pass
$h+1$ we compute and store on disk $\bwt(T_{h+1} \cdots T_1)$ given $\bwt(T_h
\cdots T_1)$. The fundamental observation is that going from $\bwt(T_h \cdots
T_1)$ to  $\bwt(T_{h+1} \cdots T_1)$ requires only that we insert the
characters of $T_{h+1}$ in $\bwt(T_h \cdots T_1)$. In other words, adding
$T_{h+1}$ does not modify the relative order of the characters already in
$\bwt(T_h \cdots T_1)$.

\begin{figure}[t]
\hrule\small{\null\algskip} At the beginning of pass $h+1$, we
assume that $\bwtext$ contains the \bwt\ of $T_{h} T_{h-1} \cdots
T_1$, and the bit array $\bits$ is defined as described in the text.
Both arrays are stored on disk.
\vspace{2mm}
\begin{smallenumerate}

\item Compute in internal memory the array $\saint[1,m]$ which contains the
    lexicographic ordering of the suffixes starting in $T_{h+1}$ and
    extending up to $T[n]$ (the end of $T$). This step uses $T_{h+1}, T_{h}$
    and the first $m-1$ entries of $\bits$.\label{step:saint} Let us call the
    suffixes starting in $T_{h+1}$ {\it new} suffixes, and the ones starting
    in $T_{h} \cdots T_1$ {\it old} suffixes.

\item Compute in internal memory the array $\bwtint[1,m]$ defined as
    $\bwtint[i] = T_{h+1}[\saint[i]-1]$, for $i=1,\ldots, m$. If
    $\saint[i]=1$ set $\bwtint[i] = \#$ where \# is a character not appearing
    in $T$.\label{step:bwt}

\item Using $\bwtint$ and scanning both $T_{h} T_{h-1} \cdots T_1$ and
    $\bits$, compute how many old suffixes fall between two lexicographically
    consecutive new suffixes. At the same time update \bits\ so that it contains
    the correct information for the extended string $T_{h+1} T_{h} \cdots
    T_1$.  \label{step:between}

\item Merge $\bwtext$ and $\bwtint$ so that at the end of the step $\bwtext$
    contains the \bwt\ of $T_{h+1} T_{h} \cdots T_1$.\label{step:update}

\end{smallenumerate}
\vspace{2mm}\hrule\caption{Pass $h+1$ of the \bwte\ algorithm to compute
$\bwt(T_{h+1}\cdots T_1)$ given $\bwt(T_{h}\cdots T_1)$.}\label{fig:sa}
\end{figure}

At the beginning of pass $h+1$, in addition to the \bwt\ of $T_h \cdots T_1$
we assume we have on disk a bit array, called $\bits$, such that $\bits[i]=1$
if and only if the suffix $T[i,n]$ starting in $T_h \cdots T_1$ is {\sf
g}reater {\sf t}han the suffix $T_h \cdots T_1$ (hence at pass $h+1$ this
array takes exactly $hm-1$ bits). For simplicity of exposition, we denote by
$\bits_h[1,m-1]$ the part of the array $\bits$ referring to the text suffixes
which start in $T_h$: namely, it is $\bits_h[i]=1$ iff the suffix starting at
$T_{h}[1+i]$ is lexicographically greater than the suffix starting at
$T_{h}[1]$, for $i=1,\ldots,m-1$ (note that all these suffixes extend past
$T_h$ up to the last character of $T$).

The pseudo-code of the generic $(h+1)$-th pass is given in
Figure~\ref{fig:sa}. Step \ref{step:saint} reads into internal memory the
substring $t[1,2m] = T_{h+1} T_{h}$ and the binary array $\bits_h[1,m-1]$.
Then we build \saint\ by lexicographically sorting the suffixes starting in
$T_{h+1}$ and possibly extending up to $T[n]$ (the last character of $T$).
Observe that, given two such suffixes starting at positions $i$ and $j$ of
$T_{h+1}$, with $i<j$, we can compare them lexicographically by comparing the
strings $t[i,m]$ and $t[j,j+m-i]$, which have the same length and are
completely contained in $t[1,2m]$ (thus, they are in internal memory). If
these strings differ we are done; otherwise, the order between the above two
suffixes is determined by the order of the suffixes starting at $t[m+1]
\equiv T_h[1]$ and $t[j+m-i+1] \equiv T_h[1+j-i]$. This order is given by the
bit stored in $\bits_h[j-i]$, also available in internal memory. This
argument shows that $t[1,2m]$ and $\bits_h$ contain all the information we
need to build \saint\ working in internal memory. The actual computation of
\saint\ is done in $\Ob{m}$ time as follows. First we compute the rank
$r_{m+1}$ of the suffix starting at $t[m+1]\equiv T_h[1]$ among all suffixes
starting in $T_{h+1}$; that is, we compute for how many indices $i$ with
$1\leq i \leq m$ the suffix starting at $t[i]$ is smaller than the suffix
starting at $t[m+1]$ (both extending up to $T[n]$). This can be done in
$\Ob{m}$ time using the above observation and Lemma~5
in~\cite{Kar07smallbwt}. At this point the problem of building \saint\ is
equivalent to the problem of building the suffix array of the string
$t[1,m]\eos$, where $\eos$ is a special end-of-string character that has rank
precisely $r_{m+1}$ (instead of being lexicographically smaller than all
other suffixes, as is usually assumed). Thus, we can compute \saint\ in
$\Ob{m}$ time and $\Ob{m \log m}$ bits of space with a straightforward
modification of the algorithm DC3~\cite{KSB06}.

At Step~\ref{step:bwt} we build the array \bwtint\ which is a sort of \bwt\
of the string $T_{h+1}$: it is not a {\em real} \bwt\ because it refers to
suffixes which are not confined to $T_{h+1}$ but start in this string and
extend up to $T[n]$. The crucial point of the algorithm is then to compute
some additional information that allows us to {\em merge} \bwtint\ and
\bwtext\ I/O-efficiently. This additional information consists of a counter
array $\gap[0,m]$ which stores in $\gap[j]$ the number of (old) suffixes of
the string $T_h \cdots T_1$ which lie lexicographically between the two new
suffixes--- i.e. $\saint[j-1]$ and $\saint[j]$--- starting in $T_{h+1}$. Note
that the \gap\ array was used also in~\cite{CraFer02}. However
in~\cite{CraFer02} \gap\ is computed in $O(n\log M)$ time using $\Th{n\log
n}$ extra bits; here we compute \gap\ in $O(n)$ time using only the $n$ extra
bits of \bits. The following lemma is the key to this improvement.

\begin{lemma}\label{lemma:bits}
For any character $c\in\Sigma$, let $C[c]$ denote the number of characters in
\bwtint\ that are smaller than $c$, and let $\Rank(c,i)$ denote the number of
occurrences of $c$ in the prefix $\bwtint[1,i]$. Assume that the old suffix
$T[k,n]$ is lexicographically larger than precisely $i$ new suffixes, that
is,
$$
T[\saint[i],n] \prec T[k,n] \prec T[\saint[i+1],n].
$$
Now fix $c=T[k-1]$. Then, the old suffix $T[k-1,n]=c \, T[k,n]$ is
lexicographically larger than precisely $j$ new suffixes, that is,
$T[\saint[j],n] \prec T[k-1,n] \prec T[\saint[j+1],n]$, where
$$
j = \cases{
 C[c] + \Rank(c,i) & \mbox{if $c\neq T_{h+1}[m]$;}\cr
 C[c] + \Rank(c,i) + \bits[k] & \mbox{if $c=T_{h+1}[m]$.}}
$$
\end{lemma}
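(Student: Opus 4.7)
My plan is to count the $j$ new suffixes strictly smaller than $T[k-1,n] = c\,T[k,n]$ by splitting on whether their first character is less than $c$ (Case A) or equal to $c$ (Case B). A new suffix $T[q,n]$ with $T_{h+1}[q] < c$ satisfies $T[q,n] \prec c\,T[k,n]$ regardless of its tail, so Case A contributes exactly the number of new suffixes whose first character is $<c$. Via the correspondence $\bwtint[\ell] = T_{h+1}[\saint[\ell]-1]$, which ties each non-$\#$ entry of $\bwtint$ to the first character of the new suffix at position $\saint[\ell]-1$ of $T_{h+1}$, this count is exactly $C[c]$ (with the position $T_{h+1}[m]$, which is absent from $\bwtint$, handled separately below in Case B).

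For Case B, a new suffix $T[q,n]$ with $T_{h+1}[q] = c$ is smaller than $c\,T[k,n]$ iff its continuation $T[q+1,n]$ is strictly smaller than $T[k,n]$. I split on whether $q \leq m-1$ (so $T[q+1,n]$ is itself a new suffix) or $q = m$ (so $T[q+1,n] = T[m+1,n]$ is the leftmost old suffix). For $q \leq m-1$, let $\ell$ be the rank of $T[q+1,n]$ among the new suffixes: then $\saint[\ell] = q+1$, so $\bwtint[\ell] = T_{h+1}[q] = c$, and the hypothesis $T[\saint[i],n] \prec T[k,n] \prec T[\saint[i+1],n]$ makes the condition $T[q+1,n] \prec T[k,n]$ equivalent to $\ell \leq i$. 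The map $q \mapsto \ell$ is a bijection between $\{1,\ldots,m-1\}$ and the indices $\ell$ with $\bwtint[\ell] \neq \#$, and $c\in\Sigma$ so the $\#$-entry is not counted by $\Rank(c,i)$ anyway; thus the contribution from this sub-case is exactly $\Rank(c,i)$.

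The sub-case $q = m$ contributes only when $c = T_{h+1}[m]$, and then the condition $T[m+1,n] \prec T[k,n]$ is by construction equivalent to $\bits[k]=1$, since $\bits[k]$ is defined to be $1$ iff $T[k,n]$ is lexicographically greater than $T_h \cdots T_1 = T[m+1,n]$. So this sub-case contributes $\bits[k]$ in the special branch and $0$ otherwise. Adding the Case A total $C[c]$ and the Case B total $\Rank(c,i) + [\,c = T_{h+1}[m]\,]\cdot\bits[k]$ yields the stated two-branch formula. The delicate step, on which I would focus most of the care in writing out the full proof, is this boundary: the LF-style bijection driving Case B fails for exactly one position of $T_{h+1}$ — namely $q=m$, whose follower escapes the current block — and the single bit $\bits[k]$ is precisely what the algorithm maintains so as to repair that boundary on the fly.
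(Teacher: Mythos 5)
Your proof is correct and follows essentially the same route as the paper's: split the new suffixes by whether their first character is smaller than or equal to $c$, obtaining $C[c]$ and $\Rank(c,i)$ respectively via the correspondence $\bwtint[\ell]=T_{h+1}[\saint[\ell]-1]$, and repair the single boundary position $q=m$, whose follower escapes the block, using $\bits[k]$. If anything, you are more explicit than the paper about the bijection underlying the $\Rank(c,i)$ count and about why the $\#$ entry does not perturb it.
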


\begin{proof}
Obviously $T[k-1,n]$
is larger than the new suffixes that start with a character smaller than $c$
(they are $C[c]$), and is smaller than all new suffixes starting with a
character greater than $c$. The crucial point is now to compute how many new
suffixes $T[\ell,n]$ starting with $c=T[k-1]$ are smaller than $T[k-1,n]$.
(Recall that $T[\ell,n]$ starts in $T_{h+1}$, and $T[k,n]$ starts in $T_h
\cdots T_1$.)

Consider first the case $c\neq T_{h+1}[m]$.  Since $T[\ell]=c\neq
T_{h+1}[m]$, we have that $T[\ell+1,n]$ is also a new suffix (i.e. it lies in
$T_{h+1}$) and $T[\ell,n] \prec T[k-1,n]$ iff $T[\ell+1,n] \prec T[k,n]$.
Furthermore, $T[\ell]=T[k-1]=c$ so that the sorting of the rows in BWT
implies that, counting how many new suffixes starting with $c$ are smaller
than $T[k-1,n]$ {\em is equivalent to} counting how many $c$'s occur in
$\bwtint[1,i]$. This is precisely $\Rank(T[k-1],i)$. Assume now that
$c=T_{h+1}[m]$. Among the new suffixes starting with $c$ there is also the
one starting at position $T_{h+1}[m]$, call it $T[\ell',n]$. We cannot use
the above trick to compare $T[k-1,n]$ with $T[\ell',n]$ since $T[\ell'+1,n]$
coincides with $T_h \cdots T_1$ and is therefore an old suffix, not a new one
and thus not occurring in \saint. However, it is still true that $T[\ell',n]
\prec T[k-1,n]$ iff $T[\ell'+1,n]\prec T[k,n]$ and since $T[\ell'+1,n]=T_h
\cdots T_1$ we know that this holds iff $\bits[k]=1$.\qed
\end{proof}

Step~\ref{step:between} uses the above lemma to compute the array $\gap$ with
a single right-to-left scan of the two arrays $T_h \cdots T_1$ and $\bits$
available on disk. Step~\ref{step:between} takes $\Ob{n}$ time because we can
build a $o(m)$-bit data structure supporting $O(1)$ time \Rank\ queries over
\bwtint~\cite{NM-survey07}. Finally, Step~\ref{step:update} uses \gap\ to
create the new array $\bwtext$ by merging $\bwtint$ with the current
$\bwtext$. The idea is very simple: for $i=0,\ldots,m-1$ we copy $\gap[i]$
old values in $\bwtext$ followed by the value $\bwtint[i+1]$.

Note that at Step~\ref{step:between} we also compute the content of \bits\
for the next pass: namely, $\bits[k]=1$ iff $T_{h+1} \cdots T_1 \prec
T[k,n]$. We know the lexicographic relation between $T_{h+1} \cdots T_1$ and
all new suffixes since it does exist $r_1$ such that $T[\saint[r_1],n] =
T_{h+1} \cdots T_1$ (the latter is a new suffix, indeed). The relation
between $T_{h+1} \cdots T_1$ and any old suffix $T[k,n]$ is available during
the construction of $\gap$: when we find that $T[k,n]$ is larger than $i$ new
suffixes of $\saint$, we know that $T_{h+1} \cdots T_1 \prec T[k,n]$ iff $r_1
\leq i$. So we can write the correct value for $\bits[k]$ to disk.

It is easy to see that our algorithm uses $O(m\log m)$ bits of internal
memory. Hence, if the internal memory consists of $M$ words, we can take
$m=\Th{M}$ and establish the following result.

\begin{theorem}\label{teo:bwte}
We can compute the \bwt\ of a text $T[1,n]$ in $\Ob{n/M}$ passes over
$\Th{n}$ bits of disk data, using $n$ bits of disk working space. The total
number of I/Os is $\Ob{n^2/(MB\log n)}$ and the CPU time is $\Ob{n^2/M}$.
\qed
\end{theorem}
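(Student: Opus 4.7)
The plan is to assemble the bounds in the theorem by aggregating the per-pass cost analysis already sketched while describing the four steps, and to verify that the disk working space stays within $n$ bits. First I fix the block size as $m = \Theta(M)$, which is feasible because Step~\ref{step:saint} holds $t[1,2m]$, $\bits_h[1,m-1]$, the output arrays \saint\ and \bwtint, and an $o(m)$-bit \Rank\ data structure over \bwtint\ simultaneously in RAM, and each of these structures fits within $\Ob{m \log m} = \Ob{M \log n}$ bits. With this choice there are $n/m = \Ob{n/M}$ passes.

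Next I would bound the disk working space. In addition to the input $T$ and the evolving output \bwtext, the only auxiliary file kept on disk is \bits. At the beginning of pass $h+1$, \bits\ has length $hm-1 < n$ bits, and after Step~\ref{step:between} it grows by $m$ new entries (one for each character of $T_{h+1}$ just absorbed) so that its length remains bounded by $n$ bits throughout. Thus the disk working space is $n$ bits, as claimed.

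Then I would account for the work per pass. Step~\ref{step:saint} builds \saint\ in $\Ob{m}$ time by reducing to suffix-array construction on $t[1,m]\eos$ via Lemma~5 of~\cite{Kar07smallbwt} and the modified DC3. Step~\ref{step:bwt} is a single $\Ob{m}$ scan. Step~\ref{step:between} performs a right-to-left scan of $T_h \cdots T_1$ and \bits, both of length $\Ob{hm}=\Ob{n}$ bits, and by Lemma~\ref{lemma:bits} each update of \gap\ and of the next version of \bits\ costs $\Ob{1}$ time thanks to the $o(m)$-bit \Rank\ index on \bwtint. Step~\ref{step:update} merges \bwtint\ into \bwtext\ in a single sequential pass of $\Ob{n}$ bits. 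Hence one pass runs in $\Ob{n}$ CPU time and scans $\Ob{n}$ bits of disk data, i.e.\ $\Ob{n/(B \log n)}$ pages on $\Theta(\log n)$-bit words per block.

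Finally I would sum over the $\Ob{n/M}$ passes. The CPU cost is $\Ob{n/M} \cdot \Ob{n} = \Ob{n^2/M}$, and the I/O cost is $\Ob{n/M} \cdot \Ob{n/(B \log n)} = \Ob{n^2/(MB \log n)}$, matching the theorem. The only mildly delicate point is checking that every piece of disk data really is touched by strictly sequential scans, rather than by random accesses, so that the per-pass I/O bound $\Ob{n/(B\log n)}$ is valid; Lemma~\ref{lemma:bits} is precisely what makes this possible in Step~\ref{step:between}, and the remaining steps are transparently scan-based by construction.
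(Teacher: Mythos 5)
Your proposal is correct and follows essentially the same route as the paper: the theorem there is justified by the preceding per-step analysis (Step~\ref{step:saint} in $\Ob{m}$ time via Lemma~5 of~\cite{Kar07smallbwt} and modified DC3, Step~\ref{step:between} in $\Ob{n}$ time via Lemma~\ref{lemma:bits} with an $o(m)$-bit \Rank\ structure, a scan-based merge at Step~\ref{step:update}), the choice $m=\Th{M}$ from the $O(m\log m)$-bit internal-memory usage, and the $n$-bit \bits\ array as the only disk working space. Your aggregation of these bounds over the $\Ob{n/M}$ passes matches the paper's accounting exactly.
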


\medbreak\noindent{\bf Single-disk implementation.} In the \bwte\ algorithm,
and in its derivatives described below, we scan $T$ and the \bits\ array in
parallel so we need at least {\em two disks}. However, in view of the lower
bounds in Section~\ref{sec:lb}, which hold for a single disk, it is important
to point out that our algorithm (and its derivatives) can work via sequential
scans using {\em only one} disk. This is possible by interleaving $T$ and the
\bits\ array in a single file. At pass $h$ we interleave $m$ new bits within
the segment $T_h$ (so that the portion $T_{n/m} \cdots T_{h+1}$ is shifted by
$m$ bits). These new bits together with the bits already interleaved in
$T_{h-1} \cdots T_1$ allow us to store the portion of the \bits\ array that
is needed at the next pass. Note also that the merging of \bwtext\ and
\bwtint\ at Step~\ref{step:update} can be done on a single disk. This
requires that, at the beginning of the algorithm, we reserve on disk the
space for the full output ($n$ characters), and that we fill this space
right-to-left (that is, at the end of pass $h$ $\bwt(T_h \cdots T_1)$ is
stored in the rightmost $mh$ characters of the reserved space).

\medbreak\noindent{\bf Working with compressed files.} Accessing files only
by sequential scans makes it possible to store them on disk in compressed
form. This is not particularly significant from a theoretical point of view
--- in the worst case the compressed files still take $\Th{n}$ bits --- but is a
significant advantage in practice. If the input file $T[1,n]$ is large, it is
likely that it will be given to us in compressed form. If the compression
format allows for the scanning of a file without full decompression (as, for
example, \gzip, \bzip, and \ppm) our algorithm is able to work on the
compressed input without additional overhead. An algorithm that accesses the
input non-sequentially would require the additional space for an uncompressed
image of $T[1,n]$. The same considerations apply to the output file $\bwt(T)$
and the intermediate files $\bwt(T_h \cdots T_1)$. Since they are \bwt's of
(suffixes of) $T$ they are likely to be highly compressible, so it is very
convenient to be able to store them in compressed form: this makes our
algorithm even more ``lightweight''. It goes without saying that using
compressed files also yields a reduction of the I/O transfer so this is
advantageous also in terms of running time (see experimental results below).

Note that the use of compressed files is straightforward if we use two disks:
in this way we can store $T$ and \bits\ separately and at Step~4 we can store
on two different disks the compressed images of $\bwt(T_h \cdots T_1)$ and
$\bwt(T_{h+1} \cdots T_1)$. The use of compression in the single disk version
is trickier and requires the use of ad-hoc compressors.

\medbreak\noindent{\bf Experimental results.} To test how \bwte\ works in
practice, we have implemented a prototype in~{\sf C} {(source code available
at the page {\sf http://people.unipmn.it/manzini/bwtdisk})}. The main
modification wrt the description of Fig.~\ref{fig:sa} is that, instead of
storing the entire array \bits\ on disk, we maintain a ``reduced'' version in
RAM. In fact, Step~\ref{step:saint} uses $\bits_h[1,m-1]$ which can be stored
in RAM. At Step~\ref{step:between} we need the entire \bits\ to
lexicographically compare all suffixes $T[k,n]$ of $T_h \cdots T_1$ with $T_h
\cdots T_1$ itself (see proof of Lemma~\ref{lemma:bits}). Instead of storing
the whole \bits, we keep in internal memory the length-$\ell$ prefix of $T_h
\cdots T_1$, call it $\alpha_h$, and the entries $\bits[k]$ such that
$\alpha_h$ is a prefix of $T[k,n]$. Unless $T$ is a very pathological string,
this ``reduced'' version of \bits\ is much more succinct: by setting
$\ell=1024$ we were able to store it in internal memory in just 128KB. Using
this ``reduced'' version, the comparison between $T[k,n]$ and $T_h \cdots
T_1$, can be done by comparing $T[k,n]$ with $\alpha_h$. If these two strings
are different, we are done; otherwise, $\alpha_h$ is a prefix of $T[k,n]$ and
thus the bit $\bits[k]$ is available and provides the result of that suffix
comparison. Hence, by using standard string-matching techniques, it is
possible to compare all suffixes $T[k,n]$ with $T_h \cdots T_1$ in
$\Ob{n+\ell}$ time overall.

Our implementation can work with a block size $m$ of up to 4GB and uses $8m$
bytes of internal memory for the storage (and computation) of \saint,
\bwtint, and the \gap\ array. We ran our experiments with $m=400$MB
on a Linux box with a 2.5Ghz AMD Phenom 9850 Quad Core processor (only one
CPU was used for our tests) and 3.7GB of RAM.  On the same machine we also
tested the best competitor of our algorithm. Since all other known approaches
for computing the BWT in external memory compute the suffix array first, we
tested the \dcx\ tool~\cite{DKMS08} which is the current best algorithm for
computing the suffix array in external memory. We ran \dcx\ using two disks
for the storage of temporary files and setting the {\sf ram\_usage} parameter
to 1500MB. With these settings the peak heap memory usage reported by {\sf
memusage} was between 3.2 and 3.3 Gigabytes for both \bwte\ and \dcx.

\begin{figure}[t]

\def\Dsize{11cm}

\begin{center}
{\footnotesize

\begin{tabular}{|l|r|l|}

\hline

File Name & Size (GB) & Description \\ \hline

{\sf Proteins} & 1.10 &
\begin{minipage}[t]{\Dsize}
Sequence of bare protein sequences from the Pizza\&Chili
corpus~\cite{pizzachili_home}.
\end{minipage}\\\hline

{\sf Swissprot} & 1.88 &
\begin{minipage}[t]{\Dsize}
Annotated Swiss-prot Protein knowledge base (file {\sf uniprot\_sprot.dat}
downloaded from {\sf ftp://ftp.ebi.ac.uk/pub/databases/swissprot/release} on
June 2009).
\end{minipage}\\\hline

{\sf Genome} & 2.86 &
\begin{minipage}[t]{\Dsize}
Human genome (May 2004 version) filtered in order to have a string over the
alphabet A,C,G,T,N. This is the same file used in~\cite{DKMS08}.
\end{minipage}\\\hline

{\sf Gutenberg} & 3.05 &
\begin{minipage}[t]{\Dsize}
Concatenation of English texts from Project Gutenberg. This is the same file
used in~\cite{DKMS08}.
\end{minipage}\\\hline

{\sf Random2} & 4.00 &
\begin{minipage}[t]{\Dsize}
Concatenation of two copies of a string of length $2$GB with characters
randomly generated over an alphabet of size 128 using the tool {\sf
gentext}~\cite{pizzachili_home}.
\end{minipage}\\\hline

{\sf Mice\&Men} & 5.43 &
\begin{minipage}[t]{\Dsize}
Concatenation of the Mouse (mm9) and Human (hg18) genomes
filtered in order to have a string over the alphabet A,C,G,T,N.
\end{minipage}\\\hline

{\sf Html} & 8.00 &
\begin{minipage}[t]{\Dsize}
First $8$ GB of file {\sf Law03} from~\cite{dataset} consisting of a
collection of {\sf html} pages crawled from the {\sf UK}-domain in 2006-07
(with the WARC headers removed).
\end{minipage}\\\hline

\end{tabular}

\bigbreak

\begin{tabular}{|l|r||r|r||r|r|r|r||r|r|}\hline

\multicolumn{2}{|c||}{Input file}  & \multicolumn{2}{c||}{\bwte\
space}&\multicolumn{4}{c||}{\bwte\ time}&
\multicolumn{2}{c|}{\dcx}\\
\hline

{name\hspace{1.3cm}} & {size\ \ \ \ \ } & {\it output/working} & {\it total\
\ \ \ } &
 {\it step 1} & {\it step 3} & {\it step 4}&{\it total\ \ \ } &  time\ \ \ \ & work space
\\\hline\hline

{\sf Proteins} &1.10&	0.29&	1.02&	0.49&	0.59& 0.15& 1.45& 6.31& 30.62\\\hline
{\sf SwissProt}&1.88&	0.08&	0.33&	0.36&	1.16& 0.10& 1.88& 6.67& 30.68\\\hline
{\sf Genome}   &2.86&	0.22&	0.69&	0.50&	2.32& 0.55& 3.72& 6.88& 30.68\\\hline
{\sf Gutenberg}&3.05&	0.18&	0.74&	0.85&	2.19& 0.36& 3.76& 7.14& 30.58\\\hline
{\sf Random2}  &4.00&	0.56&	2.00&	0.80&	3.28& 2.32& 6.90& 7.48& 30.66\\\hline
{\sf Mice\&Men-4}&4.00&	0.22&	0.70&	0.52&	3.37& 0.80& 5.06& 7.22& 30.66\\\hline
{\sf Html-4}   &4.00&	0.06&	0.33&	0.58&	2.55& 0.19& 3.60& 7.49& 30.66\\\hline
{\sf Mice\&Men}&5.43&	0.22&	0.70&	0.52&	4.06& 1.14& 6.10&\multicolumn{1}{c|}{---}& \multicolumn{1}{c|}{---}\\\hline
{\sf Html}     &8.00&	0.05&	0.32&	0.49&	4.90& 0.33& 6.01&\multicolumn{1}{c|}{---}& \multicolumn{1}{c|}{---}\\\hline

\hline

\end{tabular}

}
\end{center}

\raisecaption \caption{\footnotesize\label{fig:experiments} Dataset (top) and
experimental results (bottom). Since \dcx\ cannot handle files larger than
4GB we considered also the files {\sf Mice\&Men} and {\sf Html} truncated at
4GB (indicated by the suffix {\sf -4}). In the bottom table, column 2 reports
the size (in Gigabytes) of the uncompressed input file: the values in all
other columns are normalized with respect to this size. Column~3 reports the
size of the compressed \bwt\ which is also an upper bound to the working
space of \bwte (see text). Column~4 reports the total (working $+$ input $+$
output) disk space used by \bwte. Columns 5--9 report running (wallclock)
times in microseconds per input byte. The last column reports the size of
\dcx\ working space (again normalized with respect to the size of the input
file).}

\end{figure}

In our implementation we store the files in compressed form: the input $T$ is
\gzip-compressed, whereas the partial (and final) \bwt's are compressed by
\rle\ followed by {\em range coding}: according to the experiments
in~\cite{FGM06b} this combination offers the best compression/speed tradeoff
for compressing the BWT. Our current implementation uses a single disk. Since
at Step~\ref{step:update} we scan simultaneously two partial \bwt's (say
$\bwt(T_h \cdots T_1)$ and $\bwt(T_{h+1}\cdots T_1)$) in that step the disk
head has to move between the two files and the algorithm is not
``scan-only''. We plan to support the use of two disks to remove this
inefficiency in a future version.

Our algorithm stores on disk only the compressed input and at most two
compressed partial \bwt's. Hence, the working space (the space used in
addition to the input and the output) has the size of a single compressed
partial~\bwt: in Fig.~\ref{fig:experiments} we bound it with the size of the
final compressed \bwt. The results in Fig.~\ref{fig:experiments} show that
our algorithm is indeed lightweight: for all files the working space is
(much) smaller than the size of the input text uncompressed; for most files
even the {\it total} space usage is less than the size of the uncompressed
input. The algorithm \dcx\ uses consistently a working space of more than
$30$ times the size of the uncompressed input. Comparing columns 3 and 9 we
see that, for all files except {\sf Random2}, \dcx\ working space is more
than 100 times the size of the compressed \bwt; for the file {\sf Html-4},
which is highly compressible, \dcx\ working space is more than 500 times the
size of the compressed \bwt! (recall that \bwte\ working space is at most the
size of the compressed \bwt).

By comparing the running times (columns 8 and 9 in
Fig.~\ref{fig:experiments}) we see that \bwte\ is always faster than \dcx\
(recall that \dcx\ only computes the suffix array so we are ignoring the
additional cost of computing the BWT from the suffix array). The results show
that the more compressible is the input, the faster is \bwte, while \dcx's
running time is much less sensitive to the content of the input file. Another
interesting data is the total I/O volume of the two algorithms (measured as
the ratio between total I/Os and input size and not reported in
Fig.~\ref{fig:experiments}). According to~\cite{DKMS08} for files up to 4GB
for \dcx\ such ratio is between 200 and 300. For \bwte\ such ratio is less
than 6 for all files except {\sf Random2} for which the ratio is 14.76.

The asymptotic analysis predicts that, if $M\ll n$, as the size of the input
grows, our algorithm will eventually become slower than \dcx\ (our algorithm
is designed to be lightweight, not to be fast!). However, the above results
show that the use of compressed files and avoiding the construction of the
suffix array make our algorithm, not only lightweight, but also faster than
the available alternatives on real world inputs.

\vspace{-.2cm}

\section{Other Lightweight Scan-Based Construction
Algorithms}\label{sec:other}

\vspace{-.2cm}\noindent{\bf Internal Memory Lightweight BWT construction.}
Our \bwte\ algorithm can be turned into a lightweight {\it internal memory}
algorithm with interesting time-space tradeoffs. For example, setting
$M=n/\log n$ we get an internal memory algorithm that runs in $\Ob{n\log n}$
time and uses $2{n}$ bits of working space: $n$ bits for the \bits\ array and
$n$ bits for the $M$ words that play the same role as the internal memory in
\bwte. Setting $M=n/\log^{1+\epsilon} n$, with $\epsilon>0$, the running time
becomes $\Ob{n\log^{1+\epsilon} n}$ and the working space is reduced to
$n+o(n)$ bits. This algorithm still accesses the text and the partial \bwt's
by sequential scans, hence it takes full advantage of the very fast caches
available on modern CPU's.

We can further reduce the working space by replacing the $n$ bits of the
\bits\ array with a $o(n)$-bit data structure supporting $O(1)$-time \Rank\
queries over $\bwt(T_h \cdots T_1)$. This data structure can provide in
constant time the lexicographic rank of each suffix of $T_h \cdots T_1$ (in
right-to-left order, see~\cite{NM-survey07}) and therefore can emulate,
without asymptotic slowdown, the scanning of~\bits.

{If we no longer need the input text $T$, we can write the (partial) \bwt's
over the already processed portion of text. That is, at the end of pass $h$,
we store $\bwt(T_h \cdots T_1)$ in the space originally used for $T_h \cdots
T_1$. The right-to-left scan of $T_h \cdots T_1$ required at
Step~\ref{step:between} can be emulated, without any asymptotic slowdown,
using the same data structure used to replace \bits\ (see
again~\cite{NM-survey07}). Note that overwriting $T$ roughly doubles the size
of the largest input that can be processed with a given amount of internal
memory.} Summing up, we have:

\begin{theorem}\label{teo:bwt}
For any $\epsilon > 0$, we can compute the BWT in internal memory in $\Ob{n
\log^{1+\epsilon} n}$ time, using $o(n)$ bits of working space.  The BWT can be stored in the space originally containing the input text. \qed
\end{theorem}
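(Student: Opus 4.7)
The plan is to instantiate the \bwte\ algorithm of Section~\ref{sec:bwt:construction} entirely in internal memory, then carefully shrink every auxiliary structure that would otherwise cost $\Th{n}$ bits so the total working space falls to $o(n)$ bits, while still keeping every pass linear in $n$.

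First, I would fix $m = \Th{n/\log^{1+\eps} n}$, so the algorithm performs $n/m = \Th{\log^{1+\eps} n}$ passes. At each pass the internal computation of \saint\ and \bwtint\ on the $2m$-character window $t[1,2m]$ takes $\Ob{m}$ time via the modified \dcx\ construction already invoked in Section~\ref{sec:bwt:construction}, and Steps~\ref{step:between}--\ref{step:update} take $\Ob{n}$ time because they are right-to-left scans of $T_h\cdots T_1$ driven by Lemma~\ref{lemma:bits} and by the $o(m)$-bit constant-time \Rank\ structure built over \bwtint. Summing $\Ob{n}$ work per pass over $\Th{\log^{1+\eps} n}$ passes yields the claimed $\Ob{n\log^{1+\eps} n}$ time bound.

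For the space, there are three culprits to drive down. The internal-memory buffer used in place of the ``RAM'' of \bwte\ costs $\Ob{m\log m}=\Ob{(n/\log^{1+\eps} n)\log n}=\os{n}$ bits, which already matches the target. The \bits\ array would cost $\Th{n}$ bits if kept explicitly; I would replace it by an $o(n)$-bit data structure supporting $O(1)$-time \Rank\ over the current $\bwtext$, as pointed out in the paragraph preceding the theorem (citing~\cite{NM-survey07}): since the lexicographic rank of any suffix of $T_h\cdots T_1$ is readable in $O(1)$ time from such a structure, every sequential reference to \bits\ during Step~\ref{step:between} can be simulated without asymptotic slowdown. Finally, I would reclaim the $n$ characters of storage held by the input by overwriting $T_h\cdots T_1$ with $\bwt(T_h\cdots T_1)$ at the end of pass $h$; the right-to-left scan of $T_h\cdots T_1$ needed by Step~\ref{step:between} is emulated by the same \Rank\ structure on the partial BWT, again using~\cite{NM-survey07}.

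The main obstacle is the last point: we must verify that after overwriting we can still simulate the sequential reads of both $T_h\cdots T_1$ and of \bits\ that Steps~\ref{step:between} and~\ref{step:update} require, and that doing so does not blow up the time beyond $\Ob{n}$ per pass. This follows because the \Rank/\Psi-style navigation over the partial BWT delivers the characters of $T_h\cdots T_1$ in right-to-left order in constant amortized time per character, and simultaneously provides the bit of \bits\ needed at that position. Putting the three savings together, the total working space is $o(n)+o(n)+0=o(n)$ bits while the running time stays $\Ob{n\log^{1+\eps} n}$, proving the theorem.
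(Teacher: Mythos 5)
Your proposal is correct and follows essentially the same route as the paper: choosing $m=\Th{n/\log^{1+\eps}n}$ so the $\Ob{m\log m}$-bit buffer is $\os{n}$, replacing the \bits\ array by an $o(n)$-bit constant-time \Rank\ structure over the partial BWT, and overwriting $T_h\cdots T_1$ with $\bwt(T_h\cdots T_1)$ while emulating the right-to-left scan via that same structure. No substantive differences.
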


The only internal-memory BWT construction algorithm that can use such a small
working space is~\cite{Kar07smallbwt} which---when restricted to using
$\os{n}$ bits of working space---runs in $\om{n\log^2 n}$ time. Note,
however, that the algorithm~\cite{Kar07smallbwt} has the advantage of working
also for non constant alphabets and can use as little as $\Th{n\log
n/\sqrt{v}}$ bits of working space with $v = \Ob{n^{2/3}}$, running in
$\Ob{n\log n + vn}$ worst case time.

The algorithms in~\cite{Hon07,NaPa07} build directly a compressed suffix
array but, at least in their original formulation, they use $\Om{n}$ bits of
working space. The algorithm in~\cite{Siren09} build  a compressed suffix
array of a collections of texts. Note that building a (compressed) suffix
array for a collection of texts of total length $n$ is a different (simpler)
problem than building a (compressed) suffix array of single text of length
$n$: in a collection each text is terminated by a unique {\sf eof} symbol so
there cannot be very long common prefixes. For a collection of $p=\Th{\log
n}$ texts of size $n/p$ the algorithm in~\cite{Siren09} runs in $\Ob{n\log
n}$ time using $O(n)$ bits of working space, storing the output in compressed
form and overwriting the input. The algorithm in~\cite{Siren09} is based on
the same techniques from~\cite{pat-tree} that we use in this
paper.\footnote{Note that \cite{Siren09} was published after the first draft
of this paper~\cite{bwtext_tr} was completed.} However we merge $\bwt(T_h
\cdots T_1)$ and $\bwt(T_{h+1})$ following the original idea
of~\cite{pat-tree} of locating the suffixes of $T_h \cdots T_1$ inside the
(compressed) suffix array of $T_{h+1}$, while \cite{Siren09} does the
opposite. This choice implies other differences: for example, instead of the
\gap\ array \cite{Siren09} builds an array of ranks which is later sorted to
perform the merging.

\bigbreak\noindent{\bf Lightweight SA construction.} We can transform our
\bwte\ algorithm into a lightweight algorithm for computing the Suffix Array.
The key observation is that the values stored in $\bwtext$ are never used in
subsequent computations. Therefore, to compute the \sa, we can simply replace
$\bwtext$ with an array $\saext$ containing the \sa\ entries (that is, at the
end of pass~$h$ $\saext$ contains $\sa(T_h \cdots T_1)$). The only change in
the algorithm is that, after the computation of the \gap\ array, at
Step~\ref{step:update} we update \saext\ as follows: we copy $\gap[i]$ old
\saext\ entries followed by $\saint[i+1]$, for $i=0,\ldots,m-1$. Summing up,
we have the following result.

\begin{theorem}\label{teo:newCF}
We can compute the suffix array in $\Ob{n/M}$ passes over $\Th{n\log n}$ bits
of disk data, using $n$ bits of disk working space. The total number of I/Os
is $\Ob{n^2/(MB)}$ and the CPU time is $\Ob{n^2/M}$.\qed
\end{theorem}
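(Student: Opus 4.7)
The plan is to recycle the \bwte\ algorithm essentially unchanged, exploiting the observation that in \bwte\ the on-disk array $\bwtext$ is write-only: Step~\ref{step:saint} and Step~\ref{step:bwt} are purely internal-memory computations driven by the freshly read block $t[1,2m]$ and by $\bits_h$; Step~\ref{step:between} builds $\gap$ using only $\bwtint$ (in RAM) together with sequential scans of $T_h\cdots T_1$ and $\bits$; and Step~\ref{step:update} only writes to $\bwtext$ (while merging in $\bwtint$). Consequently, we may simply swap $\bwtext$ for a larger on-disk array $\saext$ intended to hold $\sa(T_h\cdots T_1)$ at the end of pass~$h$, without disturbing any of the logic that maintains correctness.

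Concretely, I would keep Steps~\ref{step:saint}--\ref{step:between} exactly as stated, and modify only the output-writing side. In Step~\ref{step:saint} the values of $\saint[i]$, which are positions inside $T_{h+1}$, are shifted by the known global offset of $T_{h+1}$ in $T$ so that they become absolute text positions. In Step~\ref{step:update} the merge is rephrased as: for $i=0,\ldots,m-1$, copy the next $\gap[i]$ entries from the old $\saext$ to the new $\saext$, then append $\saint[i+1]$. Correctness of this incremental merge follows immediately from the same invariant used in \bwte: by Lemma~\ref{lemma:bits} the $\gap$ array counts precisely how many old suffixes fall between consecutive new suffixes in lex order, so interleaving old $\saext$ entries with the new $\saint$ values reproduces the lex order of the suffixes of $T_{h+1}\cdots T_1$. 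The $\bits$ array is computed during Step~\ref{step:between} exactly as before, since its definition is independent of whether we store suffix indices or \bwt\ characters.

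For the resources, disk working space is unchanged: the only auxiliary structure on disk is $\bits$, occupying at most $n$ bits; $T$ is the input and $\saext$ is the output, so neither is counted. The internal memory footprint is still $\Ob{m\log m}$ bits, so setting $m=\Th{M}$ permits $\Ob{n/M}$ passes just as in Theorem~\ref{teo:bwte}. The only quantitative change is the size of the data scanned and rewritten in Step~\ref{step:update}: $\saext$ grows to $n\lceil\log n\rceil$ bits instead of $n$ bits, so each pass moves $\Th{n\log n}$ bits of disk traffic, giving an I/O count of $\Ob{n^2/(MB)}$ after dividing by the page size $\Th{B\log n}$. CPU time per pass remains linear in the number of entries processed, yielding $\Ob{n^2/M}$ overall.

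There is essentially no substantive obstacle: the whole proof rests on recognising that $\bwtext$ is never read back during \bwte, so substituting $\saext$ is safe. The only mild subtlety worth flagging is the bookkeeping to convert the local indices produced by the modified \dcx\ inside Step~\ref{step:saint} into absolute positions in $T$, and the verification that the asymptotic I/O and time bounds inflate only by the $\log n$ factor induced by the wider output entries, while the disk working space bound stays at $n$ bits.
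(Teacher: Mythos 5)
Your proposal is correct and matches the paper's own construction: the paper likewise observes that $\bwtext$ is never read back, replaces it with $\saext$, and modifies only Step~\ref{step:update} to copy $\gap[i]$ old $\saext$ entries followed by $\saint[i+1]$, with the same resource accounting ($\Th{n\log n}$ bits scanned per pass, $n$ bits of working space for $\bits$). The extra remark about shifting local $\saint$ indices to absolute text positions is a minor bookkeeping detail the paper leaves implicit.
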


Note that compared to the algorithm in~\cite{CraFer02}, which has a similar
structure and similar features, our new proposal reduces the working space
(and thus the amount of processed data), and the CPU time, by a logarithmic
factor.

\bigbreak\noindent{\bf Lightweight Computation of the $\Psi$ Array.} We use
the same framework as above and maintain an array $\psiext$ that, at the end
of pass $h$, contains the $\Psi$ values for the string $T_{h} \cdots T_1$.
Since the value $\Psi[j]$ refers to the suffix of lexicographic rank $j$, at
Step~\ref{step:update} $\psiext$ values are computed using the same scheme
used for BWT and suffix array entries: for $i=0,\ldots,m-1$, we first update
$\gap[i]$ values in $\psiext$ referring to old suffixes and then compute and
write the $\Psi$ value referring to $T[\saint[i+1],n]$.  We can compute
$\Psi$ values for the new suffixes using information available in internal
memory, while for old suffixes we make use of the relationship $\Psi_{h+1}[j]
= \Psi_{h}[j] + k_j$ where $k_j$ is the largest integer such that $\gap[0] +
\gap[1] + \cdots + \gap[k_j] < \Psi_{h}[j]$ (details in the full paper).
Since each value $k_j$ can be computed in $\Ob{\log m}$ time with a binary
search over the array whose $i$-th element is $\gap[0] + \cdots + \gap[i]$,
we have the following result.

\begin{lemma}\label{lemma:psi}
We can compute the array $\Psi$ in $\Ob{n/M}$ passes over $\Th{n\log n}$ bits
of disk data, using $n$ bits of working space. The CPU time is $\Ob{(n^2\log
M)/M}$.\qed
\end{lemma}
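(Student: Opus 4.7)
The plan is to run the same block-wise framework as \bwte\ (Theorem~\ref{teo:bwte}), but replace the external array $\bwtext$ with an external array $\psiext$ whose meaning at the end of pass $h$ is: $\psiext[r]$ stores $\Psi_h[r]$, the $\Psi$-value of the suffix of $T_h \cdots T_1$ whose lexicographic rank (among the old suffixes) is $r$. Everything in Steps~1--3 of Fig.~\ref{fig:sa} is left untouched: we still build $\saint$, $\bwtint$, and $\gap$ exactly as before (the $\bwt$ information is needed only to compute $\gap$ via Lemma~\ref{lemma:bits}). The only change is in Step~\ref{step:update}, where the merge must now produce, in new-rank order, the $\Psi$ values of $T_{h+1} \cdots T_1$.

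The merge naturally splits the new $\psiext$ entries into two kinds. First, for the $m$ positions corresponding to the new suffixes $T[\saint[i],n]$ we compute $\Psi_{h+1}$ directly in internal memory: for any new suffix whose successor $T[\saint[i]+1,n]$ is also a new suffix, its rank among the new suffixes is available from $\saint$ and the internal-memory bookkeeping used to build $\saint$; the rank among all suffixes is obtained by adding the number of old suffixes that precede it, which is exactly the appropriate prefix sum of $\gap$. The one new suffix whose successor is the old suffix $T_h \cdots T_1$ (namely the suffix starting at $T_{h+1}[m]$) is handled using $r_1$, which is already computed by Step~\ref{step:saint}.

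Second, for the positions corresponding to old suffixes we exploit the invariant that inserting the new suffixes into the lex order shifts the rank of every old suffix of old rank $j'$ to a new rank $j' + k$, where $k$ is the number of new suffixes preceding it. Applied to the successor of an old suffix $T[k,n]$ whose old $\Psi_h$-value is $j' = \Psi_h[j]$, this gives the stated update
\[
\Psi_{h+1}[\text{new rank}] = \Psi_h[j] + k_j,
\]
where $k_j$ is the number of new suffixes that end up lexicographically before the old suffix of old rank $\Psi_h[j]$, i.e.\ the largest index such that $\gap[0] + \cdots + \gap[k_j] < \Psi_h[j]$. Once $\gap$ is in RAM we build, in $\Ob{m}$ time, the prefix-sum array of $\gap$, and then each $k_j$ is recovered by a binary search costing $\Ob{\log m}$.

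This lets us realize Step~\ref{step:update} by a single sequential scan of the old $\psiext$ paired with the in-RAM computation of the new entries, so the pass remains scan-only, uses only the $n$ bits of \bits\ as disk working space, and performs $\Th{n\log n}$ bits of I/O per pass on $\psiext$. Summing over the $n/M$ passes and noting that pass $h{+}1$ updates $hm$ old entries, the total CPU cost is $\sum_{h=1}^{n/m} hm \cdot \Ob{\log m} = \Ob{(n^2 \log M)/M}$, matching the claim. The one step that requires real care in the write-up is verifying that the internal-memory computation of $\Psi_{h+1}$ for \emph{new} suffixes is correct and cheap in all cases (in particular when the successor is old); this boils down to showing that the necessary ingredients ($\saint$, $r_1$, and the prefix sums of $\gap$) suffice, which is exactly the kind of bookkeeping already done for $\bwtint$ in \bwte, so no new ideas beyond the update formula above are needed.
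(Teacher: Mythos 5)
Your proposal follows essentially the same route as the paper: keep the \bwte\ framework, replace $\bwtext$ by $\psiext$, compute the $\Psi$ values of new suffixes in internal memory, and update old-suffix values via $\Psi_{h+1}[j]=\Psi_h[j]+k_j$ with $k_j$ found by binary search over the prefix sums of \gap, which yields the $\Ob{(n^2\log M)/M}$ CPU bound exactly as claimed. The only small quibble is that the new suffix starting at $T_{h+1}[m]$ needs the merged rank of the \emph{old} suffix $T_h\cdots T_1$ (obtainable from $r_{m+1}$ together with its tracked rank among the old suffixes), not $r_1$; this is precisely the kind of bookkeeping the paper itself defers to the full version.
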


To reduce the amount of processed data, we observe that although $\Psi$
values are in the range $[1,n]$, it is well known~\cite{NM-survey07} that the
sequence $\Psi[1], \Psi[2] - \Psi[1], \Psi[3] - \Psi[2], \ldots, \Psi[n] -
\Psi[n-1]$, can be represented in $\Th{n}$ bits. Thus, by storing an
appropriate encoding of the differences $\Psi[i] - \Psi[i-1]$ we can obtain
an algorithm that works over a total of $\Ob{n}$ bits.

\begin{theorem}\label{teo:psi}
We can compute the array $\Psi$ in $\Ob{n/M}$ passes over $\Th{n}$ bits of
disk data, using ${n}$ bits of disk working space. The total number of I/Os
is $\Ob{n^2/(MB\log n)}$ and the CPU time is $\Ob{(n^2\log M)/M}$.\qed
\end{theorem}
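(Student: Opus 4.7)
The plan is to upgrade the algorithm of Lemma~\ref{lemma:psi} so that $\psiext$ resides on disk as a compressed sequence of differences rather than as an array of $\log n$-bit integers, while preserving the scan-based structure of each pass. The pass count and the CPU bound will be inherited from the lemma; the only quantity that changes is the disk data volume, which shrinks from $\Th{n\log n}$ to $\Th{n}$, and correspondingly the I/O count.

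First I would encode $\psiext$ with a universal prefix-free code (e.g.\ Elias gamma/delta) applied to the sequence $\Psi[1],\, \Psi[2]-\Psi[1],\, \Psi[3]-\Psi[2],\,\ldots$. For a constant-size alphabet $\Sigma$, the $\Psi$ array decomposes into $|\Sigma|$ monotonically increasing runs, so the sum of positive within-run differences telescopes to $O(n)$ while the $O(|\Sigma|)$ negative jumps at run boundaries can be stored explicitly in $O(\log n)$ bits in total. This is the $\Th{n}$-bit representation already recalled in the paragraph preceding the theorem, and I would use it both for the on-disk $\psiext$ and for all intermediate files.

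Next I would restructure the update step so that the merge of $\psiext$ with the new $\Psi$ values produced in internal memory is performed by a single streaming pass over the encoded file. Reading one codeword at a time, I reconstruct each old value $\Psi_h[j]$, compute the increment $k_j$ by binary search over the prefix-sum array of $\gap$ (which fits in $\Ob{M\log n}$ bits of RAM, since $m=\Th{M}$), and emit the updated value $\Psi_h[j]+k_j$ into an output stream; whenever the $\gap$-counter indicates that a new suffix from $\saint$ must be inserted, I emit the corresponding $\Psi$ value, which is available in RAM as in the proof of Lemma~\ref{lemma:psi}. On the output side the consecutive emitted values are re-encoded differentially with the same code, so the new $\psiext$ is again $\Th{n}$ bits: indeed it is the differential encoding of the $\Psi$ array of $T_{h+1}\cdots T_1$, which enjoys the same run structure.

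The main obstacle is keeping the encoded stream coherent while values shift: inserting a new value between two old ones splits one difference into two, and adding $k_j$ to a contiguous segment of old values alters the two boundary differences. I would handle this by maintaining a tiny in-RAM window storing the last emitted absolute value and the identity of the current run, so that each insertion or update touches only $O(1)$ codewords. Once this bookkeeping is in place, each pass sequentially scans $\Th{n}$ bits of disk data, yielding $\Ob{n/(B\log n))}$ I/Os per pass; multiplied over the $\Ob{n/M}$ passes of Lemma~\ref{lemma:psi} this gives the claimed $\Ob{n^2/(MB\log n)}$ I/O bound, while the CPU cost is still dominated by $\Ob{n\log M}$ binary searches per pass, for a total of $\Ob{(n^2\log M)/M}$ time.
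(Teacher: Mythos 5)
Your proposal is correct and follows essentially the same route as the paper: the paper's entire argument for Theorem~\ref{teo:psi} is the preceding paragraph, which observes that the differences $\Psi[i]-\Psi[i-1]$ admit a $\Th{n}$-bit encoding and that storing $\psiext$ in this form reduces the processed data from $\Th{n\log n}$ to $\Th{n}$ bits while keeping the pass structure and CPU cost of Lemma~\ref{lemma:psi}. Your additional details on the run decomposition of $\Psi$ and the streaming re-encoding during the merge are a faithful elaboration of what the paper leaves implicit.
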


\bigbreak\noindent{\bf Lightweight Computation of $\pos_d$.} To compute the
set $\pos_d$ with a sampling step $d=\Om{\log n}$, we modify our \bwte\
algorithm as follows. At the end of pass~$h$, instead of $\bwtext = \bwt(T_h
\cdots T_1)$ we store on disk the pairs $\langle i_1, j_1\rangle, \langle
i_2, j_2\rangle, \ldots \langle i_k, j_k\rangle$ such that
$\sa_{h}[i_\ell]=j_\ell$ is a multiple of $d$ (here $\sa_{h} =
\sa(T_{h}\cdots T_1)$). These pairs are sorted according to their first
component and essentially represent $\pos_d(T_h \cdots T_1)$. The update of
this set of pairs at pass $h+1$ is straightforward: the second component does
not change, whereas the value $i_\ell$ must be increased by the number of new
suffixes which are lexicographically smaller than $i_\ell$ old suffixes. This
can be done via a sequential scan of the already computed set of pairs and of
the $\gap$ array. Since the set $\pos_d$ contains $n/d = \Ob{n/\log n}$
pairs, we have:

\begin{theorem}\label{teo:posd}
We can compute $\pos_d$ in $\Ob{n/M}$ passes over $\Th{n}$ bits of disk data,
using $n$ bits of disk working space. The total number of I/Os is
$\Ob{n^2/(MB\log n)}$ and the CPU time is $\Ob{n^2/M}$.\qed
\end{theorem}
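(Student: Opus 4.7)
The plan is to adapt the \bwte\ framework of Section~\ref{sec:bwt:construction} so that, at the end of each pass~$h$, the on-disk structure represents $\pos_d(T_h\cdots T_1)$ in place of $\bwt(T_h\cdots T_1)$. Concretely, I would store the list of pairs $\langle i_\ell,j_\ell\rangle$ with $j_\ell=\sa_h[i_\ell]$ and $j_\ell$ a multiple of $d$, sorted by the rank component $i_\ell$. Since there are only $n/d$ such pairs and each fits in $\Ob{\log n}$ bits, the whole structure occupies $\Ob{(n/d)\log n} = \Ob{n}$ bits as soon as $d=\Om{\log n}$, meeting the $n$-bit working-space budget claimed in the statement. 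The arrays $T$ and \bits\ are handled exactly as in \bwte, so the existing right-to-left scans of those files are reused verbatim.

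The key new routine is the update of the pair list from pass~$h$ to pass~$h+1$. The second components $j_\ell$ are absolute positions in $T$ and are therefore invariant under the insertion of new suffixes. The ranks $i_\ell$, instead, shift upward by exactly the number of new suffixes that are lexicographically smaller than the corresponding old suffix, which is precisely the prefix sum $\gap[0]+\gap[1]+\cdots+\gap[k]$ for the appropriate $k$. Hence the update reduces to a single parallel left-to-right scan of the old pair list and the \gap\ array already produced by Step~\ref{step:between} of \bwte: for each old pair we advance a running prefix sum of \gap\ entries until its bucket is reached and we emit the pair with its rank increased by that sum. During the same scan I would splice in the new pairs coming from $T_{h+1}$, which can be identified directly from \saint\ in internal memory by checking which entries, once translated into absolute $T$-positions, are multiples of~$d$; since both the old list and the new contributions are in rank order, a streaming merge preserves the sorted invariant.

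For the resource analysis, each pass reads and writes $\Ob{n}$ bits of disk data purely sequentially, and there are $\Ob{n/M}$ passes, so the total CPU time is $\Ob{n^2/M}$ and the I/O count is $\Ob{n^2/(MB\log n)}$ exactly as in Theorem~\ref{teo:bwte}. Correctness follows from the invariance of $j_\ell$ together with the \gap-based rank shift, which together are precisely the definition of how $\sa_h$ grows into $\sa_{h+1}$ when $T_{h+1}$ is prepended.

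The main obstacle I anticipate is the bookkeeping in the rank-update scan: one must ensure that, while streaming the old pair list in rank order, the \gap\ prefix sums and the list of newly inserted pairs from $T_{h+1}$ are advanced in a single coordinated pass, without random access and without revisiting any region of the file. This is handled by iterating $i$ from $0$ to $m-1$ in lockstep with \gap: for each $i$ we maintain the running sum $G_i=\gap[0]+\cdots+\gap[i]$ online, emit every old pair whose pre-update rank falls in the current \gap\ interval (with its rank bumped by $G_{i-1}+i$), and then emit the new pair corresponding to $\saint[i+1]$ whenever its absolute $T$-position is a multiple of $d$. Once this interleaving is spelled out, the analysis reduces to that of Theorem~\ref{teo:bwte}, modulo the factor-$d$ saving that brings the stored output back down to $\Ob{n}$ bits.
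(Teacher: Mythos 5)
Your overall plan is exactly the paper's: keep the pair list $\langle i_\ell, j_\ell\rangle$ sorted by rank, observe that the text positions $j_\ell$ are invariant under prepending $T_{h+1}$, update the ranks $i_\ell$ by a single streaming merge with the \gap\ array, and splice in the new sampled positions from $T_{h+1}$ using \saint\ in internal memory. The space accounting ($n/d$ pairs of $\Ob{\log n}$ bits each, hence $\Ob{n}$ bits for $d=\Om{\log n}$) and the pass/I/O/CPU analysis also match the paper's.

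There is, however, a concrete error in your update formula. You write that $i_\ell$ ``shifts upward by \ldots the prefix sum $\gap[0]+\cdots+\gap[k]$'', and later that a pair in the $i$-th \gap\ interval has ``its rank bumped by $G_{i-1}+i$''. Both are wrong as written: the prefix sums $G_k$ count \emph{old} suffixes and serve only to locate which interval the stored rank $i_\ell$ falls into; the amount by which $i_\ell$ must increase is the number of \emph{new} suffixes preceding that old suffix, i.e.\ the interval index $k$ itself. An old pair with $G_{k-1} < i_\ell \le G_k$ gets new rank $i_\ell + k$, not $i_\ell + G_k$ and not $i_\ell + G_{i-1} + i$. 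Your own verbal statement of the invariant (``shift upward by exactly the number of new suffixes that are lexicographically smaller than the corresponding old suffix'') is correct and contradicts the formula you then give. The quantity $G_{i-1}+i$ is the starting offset of interval $i$ in the merged ranking, so it would be the right additive term only if you were rebuilding the rank from the within-interval position; but since the pair list is sparse (only positions divisible by $d$ appear), you cannot recover that within-interval position by counting emitted pairs, so you must add $k$ to the stored absolute rank. With that one-line fix, the streaming merge, the handling of new pairs, and the resource analysis all go through and coincide with the paper's argument.
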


\vspace{-.3cm}
\section{Lightweight Scan-Based BWT Inversion}
\label{sec:inversion}

\vspace{-.3cm} The standard algorithm for inverting the BWT is based on the
fact that the ``successor'' of character \(\bwt [i]\) in $T$ is
$\bwt[\Psi[i]]$. Since we can set up a pointer from position $i$ to position
$\Psi[i]$ for $i=1,\ldots,n$ in linear time, to retrieve $T$ we essentially
need to solve a {\em list ranking} problem in which we have to restore a
sequence given the first element and a pointer to each element's successor.
The na\"{i}ve algorithm for list ranking --- follow each pointer in turn ---
is optimal when the permuted sequence and its pointers fit in memory, but
very slow when they do not.  List ranking in external memory has been
extensively studied, and Chiang \emph{et al.}~\cite{CGG+95} showed how to
reduce this problem to sorting a set of $n$ items (recursively), each of size
$\Theta(\log n)$ bits. If we invert \bwt\ by turning it into an instance of
the list-ranking problem and solve that by using Chiang \emph{et al.}'s
algorithm, then we end up with a solution requiring \(\Th{n \log n}\) bits of
disk space. We now show that, still using Chiang \emph{et al.}'s algorithm as
a subroutine, we can invert \bwt\ using a sorting primitive now applied on
$O(n/\log n)$ items, for a total of \(\Ob{n}\) bits of disk space. In the
full paper we will also show how we can similarly recover $T$ from the array
$\Psi$ still using \(\Ob{n}\) bits of total disk space, and how to take
advantage of the $\pos_d$ array.

To leave the discussion general we write $\sort(x)$ to indicate the {cost} of
sorting $x$ items without detailing the underlying model of computation. Our
algorithm for BWT-inversion works in \(\Ob{\log n}\) rounds, each working on
two files. The first file contains a set $\Set$ of \(n / \log n\) substrings
of $T$.  Each substring is prefaced by a header, which specifies (i) the
position in \bwt\ of the substring's first character, (ii) the position in
\bwt\ of the successor in $T$ of the substring's last character, (iii)
eventually, the character whose index is in (ii) and, (iv) eventually, the
substring's position in a certain partial order that we will define later.
These substrings are non-overlapping and their length increases as the
algorithm proceeds with its rounds.  The second file contains the \bwt\ plus
an $n$-bit array $\bwtMark$ which marks the characters of \bwt\ already
appended to some substring of $\Set$. The overall space taken by both files
is $O(n)$ bits.

The main idea underlying our algorithm is to cover $T$ by the substrings of
$\Set$, avoiding their overlapping. The substrings of $\Set$ consist
initially of the characters which occupy the first  \(n / \log n\) positions
of \bwt; then, they are extended one character after the other along the
\(\Ob{\log n}\) rounds, always taking care that they do not overlap. If, at
some round, $c$ of those substrings become adjacent in $T$, they are merged
to form one single, longer substring which is then inserted in $\Set$, and
those $c$ constituting substrings are deleted. In each round, we use Chiang \emph{et al.}'s list-ranking algorithm on the headers both to detect when substrings become adjacent and to determine the order in which we should merge adjacent substrings.  Our algorithm preserves the
condition $|\Set| = n/\log n$, by selecting $(c-1)$ new substrings which are
inserted in $\Set$ and consist of one single character not already belonging
to any substring of $\Set$. This is easily done by scanning $\bwt$ and
$\bwtMark$ and taking the first $(c-1)$ characters of \bwt\ which result {\em
unmarked} in $\bwtMark$. Keep in mind that whenever a character is appended
to a substring, its corresponding bit in $\bwtMark$ is set to~$1$.

\begin{theorem} \label{teo:bwtInv}
We can invert the BWT in $O(n/M)$ passes on one disk and $O(\log^2 n)$ passes
on two or more disks. If we allow random (i.e. non sequential) disk accesses
we can invert the BWT in  $O(\frac{n}{B} \log_{M/B} \frac{n}{B\log n})$ I/Os.
For all algorithms the total disk usage is $\Th{n}$ bits.
\end{theorem}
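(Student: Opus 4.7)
The plan is to formalise the algorithm sketched just before the theorem and verify each of its three cost bounds. The argument splits naturally into: (a) a correctness invariant that bounds the number of rounds by $\log n$; (b) a per-round cost analysis that reduces the round to $\Ob{1}$ sequential scans of a $\Th{n}$-bit file plus one application of the list-ranking algorithm of Chiang et al.\ on the $n/\log n$ headers; and (c) a plug-in of the list-ranking cost into each I/O setting.

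First I would set up the invariant. At the start of each round, $\Set$ holds $n/\log n$ pairwise disjoint substrings of $T$ whose supports are exactly the positions of $\bwt$ marked in $\bwtMark$; each header records, among other things, the position~(i) of its first character in $\bwt$ and the position~(ii) of the character that follows the substring in $T$. Using the standard identity ``successor in $T$ of $\bwt[i]$ is $\bwt[\Psi[i]]$'', each round appends exactly one character to every substring by reading $\bwt$ at field~(ii), advancing~(ii) through one step of the $\Psi$ permutation, and flipping the corresponding bit in $\bwtMark$. Hence the covered portion of $T$ grows by $n/\log n$ per round and the algorithm must halt after $\log n$ rounds with all of $T$ laid out in the pieces of $\Set$.

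Next I would bound the work in one round. Aside from detecting $c$-wise coalescences and fixing the left-to-right order inside them, everything is done by $\Ob{1}$ sequential scans of the two $\Th{n}$-bit files: the one-character extensions, the update of $\bwtMark$, and the harvesting of $c-1$ unmarked positions that refill $\Set$ back to size $n/\log n$. The remaining work is exactly the list-ranking primitive of Chiang et al.\ applied to the header file, viewed as a pointer-linked list through the correspondence between fields (i) and (ii); since that primitive is driven by $\Ob{1}$ external sorts of its $n/\log n$ inputs of $\Ob{\log n}$ bits each, the round cost reduces to the cost of sorting $n/\log n$ items.

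Finally I would plug this cost into each setting. On two or more disks one external sort takes $\Ob{\log_{M/B}(n/(B\log n))}$ passes, and the $\log n$ rounds aggregate to $\Ob{\log^2 n}$ passes. On one disk we lose the parallel-merge speed-up of external sort, so the same sort costs $\Ob{n/(M\log n)}$ passes per round, summing over $\log n$ rounds to $\Ob{n/M}$. In the random-access setting a round costs $\Ob{\frac{n}{B\log n}\log_{M/B}\frac{n}{B\log n}}$ I/Os and $\log n$ rounds give the advertised bound; the total disk usage is $\Th{n}$ bits throughout because both files are that large by construction. The hardest technical point will be step~(b): verifying that a single invocation of list-ranking per round already carries enough information both to detect every $c$-wise coalescence of adjacent substrings and to fix the left-to-right order inside each coalesced block, without ever rescanning a piece processed in an earlier round.
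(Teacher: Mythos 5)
Your proposal follows essentially the same route as the paper's proof: the same round structure, the same invariant that $n/\log n$ fresh characters of $T$ get covered per round (hence $O(\log n)$ rounds), and the same reduction of each round to $O(1)$ scans of the $\Th{n}$-bit files plus $O(1)$ sorts/list-rankings of the $n/\log n$ headers, with the sort cost then instantiated per machine model. One small correction: on two sequentially-accessed disks a sort of $x$ items costs $O(\log x)$ passes via constant-way merge sort, not $O\bigl(\log_{M/B}(x/B)\bigr)$ passes as you state --- $(M/B)$-way merging requires random access --- but $O(\log x)$ per sort is exactly what yields the claimed $O(\log^2 n)$ total, so your conclusion is unaffected. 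The ``hardest technical point'' you flag at the end is resolved in the paper exactly as you anticipate: one list-ranking of the headers (linked through their first and second components) per round identifies the maximal chains of adjacent substrings and their relative order, each chain is merged into a single substring, and $\Set$ is refilled with $c-1$ fresh singletons taken from unmarked positions of the \bwt.
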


\begin{proof}
A round of the algorithm is implemented as follows. We sort
the substrings according to their headers' second components (i.e. positions
in \bwt\ of their following character in $T$), extract the headers into a
separate file, and then fill in their third components.  To do this with one
disk, we need at most $\Ob{n / (M \log n)}$ passes over the headers and \bwt;
with more than one disk, we can do it with $\Ob{1}$ passes.  Whenever we
reach a character in \bwt\ which is pointed to by some header (i.e. follows
the substring of $\Set$ corresponding to that header), then we copy this
character in the (third component of the) header and then update its second
component to make it point to the position in \bwt\ of the new character's
successor in $T$.  (This is done by keeping track of distinct
characters' frequencies in \bwt\ as we go.)  When we are finished filling in
the third components, we append the new (single) characters to the substrings
of $\Set$, update \bwtMark\ by marking $\bwtMark[i]$ if the character in
position $i$ has just been appended, and reinsert the headers.  All these
steps have total cost \(\Ob{\sort(n / \log n)}\).

The two major difficulties we face are, first, that the starting position of
a substring in \bwt\ does not tell anything about its position in $T$;
second, that the first \(n / \log n\) characters in \bwt\ will usually not be
spread evenly throughout $T$. Therefore, we will eventually need to sort the
substrings into the order in which they appear in $T$; in the meantime, we
need to prevent them overlapping. The first of these problems is easier, and
will help us with the second.  Assume that, after the last round, the
substrings cover $T$ and do not overlap. Because the first character in each
substring is the successor in $T$ of the last character in some other
substring (we consider \(T[1]\) to be \(T[n]\)'s successor), we can sort the
substrings by list ranking, as follows.  We extract the headers and apply
list ranking to their first and second components, which has cost
\(\Ob{\sort(n / \log n)}\).  We store each header's rank as the header's
third component, then reinsert the headers into the substrings.  The headers'
third components now tell us in what order the substrings appear in $T$, so
we can sort the substrings by them to obtain $T$.

The second difficulty we face is in preventing the substrings overlapping
during the rounds: if we simply stop appending to some substrings because the
characters we would  append are already in other substrings, then the number
of characters we append per round decreases and we may use more than
\(\Ob{\log n}\) rounds; on the other hand, if we start new substrings without
reducing the number of old ones, then we may store more than \(\Ob{n / \log
n}\) substrings and so, because each has three \(\Th{\log n}\)-bit pointers,
use more than \(\Ob{n}\) bits of disk space.   Our solution is to sort the
substrings by list ranking (as described above) during every round, to find
maximal sequences of adjacent substrings; we merge adjacent substrings into
one longer substring, which is inserted in $\Set$, and delete the others;
pointers can easily be kept correctly.  Again, these steps take a cost of
\(\Ob{\sort(n / \log n)}\).

At each round,
$n/\log n$ new characters are appended to the substrings of $\Set$. Since
these substrings are guaranteed not to overlap we are guaranteed that $O(\log
n)$ rounds suffice to append all characters in $T$ to $\Set$'s substrings.
The proof follows recalling that each round requires a constant number of
sort/scan primitives over $\Ob{n}$ items  and that, in our model, $\sort(x)$
takes $O(x/M)$ passes on one disk, $O(\log x)$ passes on two or more disks,
or $\Ob{\frac{x}{B} \log_{M/B} \frac{x}{M}}$ non-sequential I/Os.\qed
\end{proof}

\vspace{-.2cm}
\section{Lower bounds} \label{sec:lb}

\vspace{-.3cm} Our scan-based algorithms to compute or invert the \bwt\ have
a product ``memory's size $\times$ number of passes'' which is \(\Ob{n \log
n}\) bits.  We prove in this section that we cannot reduce them to \(\os{n}\)
bits via any algorithm that uses only {\em one single disk} (accessed
sequentially). Hence our algorithms are an \(\Ob{\log n}\)-factor from the
optimal.  We note that our lower bound is best-possible because, if we have
$\Omega (n)$ bits of memory, then we can read the input into internal memory
with one pass over the disk and then compute the BWT there using, e.g.,
Theorem~\ref{teo:bwt}.

In a recent paper~\cite{Gag09} we observed that, if the repeated substring is
larger than the product of the size of the memory and the number of passes,
then an algorithm that uses multiple passes but only one disk still cannot
take full advantage of the string's periodicity.  Using properties of De
Bruijn sequences we proved that, with polylogarithmic memory and
polylogarithmic passes over one disk, we cannot achieve entropy-only bounds
and, therefore, we also cannot compute the BWT.  In that paper, however, we
were mostly concerned with low-entropy bounds, and only considered the BWT as
a means to achieve them. Our new lower bound for the BWT alone is stronger,
with a simple and direct proof. Our previous lower bound was based on a
technical lemma that we can restate as follows:
\begin{lemma} \label{lem:function}
Consider an invertible function from strings to strings and a machine that
computes (or inverts) that function using only one disk.  We can compute any
substring of an input string given 1) for each pass, the machine's memory
configurations when it reaches and leaves the part of the disk that initially
(resp., eventually) holds that substring, and 2) the eventual (resp.,
initial) contents of that part of the disk.\qed
\end{lemma}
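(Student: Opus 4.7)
\bigbreak\noindent\textbf{Proof proposal.} The plan is to establish the lemma by a determinism-plus-invertibility argument showing that the listed data uniquely determines the desired substring (so ``compute'' follows, if needed by brute-force enumeration over candidate substrings). I will treat the ``computes $f$'' case explicitly; the ``inverts $f$'' case follows by applying the same argument to the inverse machine, whose input (originally $f$'s output) is now the string from which a substring is being recovered.

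For the main case I would fix any two inputs $I$ and $I'$ that (a) agree on the disk outside the region $R$ that initially holds the desired substring, (b) induce the same chronological sequence of boundary memory configurations at $R$ during every pass, and (c) lead to the same eventual contents of $R$. The goal is to derive $I = I'$, so that the initial contents of $R$ are forced by the listed data alone.

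The key step is to prove by induction on the chronological sequence of boundary crossings that, throughout the two runs, (i) the disk contents outside $R$ coincide at every instant, and (ii) whenever the head is outside $R$ the machine's internal memory state coincides. The base case follows from (a) and from a common initial internal state. For the inductive step, every time the head enters the outside region it does so either at the start of the computation, at the start of a pass from an end of the disk, or by crossing out of $R$; in the last case (b) forces the memory state to agree on the two runs, and in the other cases the state is the same by construction; outside contents agree by the inductive hypothesis, and determinism of the machine then gives identical outside evolution until the next boundary crossing. At termination, (i) together with (c) force the entire disk to agree, hence $f(I) = f(I')$, and invertibility of $f$ yields $I = I'$.

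I expect the main obstacle to be pinning down the one-disk sequential-access model precisely enough that the chronological list of ``reach $R$'' and ``leave $R$'' configurations in (1) indeed captures every boundary transition---in particular for passes that begin with the head already inside $R$, that end inside $R$, or that sit entirely on one side of $R$, as well as for the possibly distinct head-direction conventions between forward and reverse passes. Once each pass's boundary-crossing sequence is unambiguously defined, the inductive simulation and the final appeal to invertibility are essentially mechanical.
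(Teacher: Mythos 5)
The paper never proves this lemma: it is restated from \cite{Gag09} with an immediate qed, so there is no in-paper argument to compare against. Your proposal is the standard determinism-plus-invertibility argument that underlies the cited result, and it is essentially correct. Two points are worth making explicit. First, your uniqueness claim as stated forces the initial contents of $R$ given the boundary configurations, the eventual contents of $R$, \emph{and} the contents of the disk outside $R$; the lemma, however, must let a decoder recover the substring from items (1) and (2) alone (this is what makes it usable in the Kolmogorov-style counting of Theorem~\ref{teo:LBs}). The gap is closed by observing that candidate verification is purely local: while the head is inside $R$ the machine's behaviour depends only on the entering configuration and the current contents of $R$, so the decoder can enumerate candidates, simulate each one inside $R$ across the passes using the given entering configurations, and check the exiting configurations and final contents; your induction, applied to the true input $I$ and to the input $I'$ obtained by splicing a passing candidate into $R$, shows that exactly one candidate survives. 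Second, for the inverting machine the direction actually needed is recovering the \emph{eventual} contents of a region from its \emph{initial} contents, which is the easy direction --- a single forward local simulation, with no appeal to invertibility or enumeration --- rather than a rerun of the whole uniqueness argument on the inverse machine. With those two clarifications your proof is complete, and your closing caveat about pinning down the boundary-crossing conventions of the one-disk model is exactly the right thing to worry about but does not affect correctness.
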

Our new lower bound is based on the same lemma but, instead of combining it
with properties of De Bruijn sequences, we now combine it with a property of
the BWT itself, demonstrated by Mantaci, Restivo and Sciortino~\cite{MRS03}:
it turns periodic strings with relatively short periods into strings
consisting of relatively few runs.
\begin{lemma} \label{lem:runs}
If $T$ is periodic and its minimum period $r$ divides $n$, then \(\bwt (T)\)
consists of $r$ runs, each of length \(n / r\) and containing only one
distinct character.
\end{lemma}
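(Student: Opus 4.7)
The plan is to argue from the cyclic version of the BWT used by Mantaci, Restivo and Sciortino: sort all $n$ cyclic rotations $R_0, \ldots, R_{n-1}$ of $T$ lexicographically and read off their last characters. Writing $T = P^{n/r}$ with $|P| = r$, the key combinatorial claim is that $R_i = R_j$ as length-$n$ strings if and only if $i \equiv j \pmod r$. The ``if'' direction follows from the cyclic periodicity identity $T[k] = T[(k+r)\bmod n]$, which uses both periodicity of $T$ and $r \mid n$ to wrap around the boundary. For the ``only if'' direction, $R_i = R_j$ says that the cyclic shift by $d = (j-i) \bmod n$ fixes $T$, and a short Fine-and-Wilf-style argument shows that any such cyclic shift must be a multiple of the minimum period $r$ (the gcd of two cyclic periods is again a cyclic period, and within $[1, n-r]$ a cyclic period is an ordinary period). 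Hence the $n$ rotations partition into exactly $r$ equivalence classes, each of size $n/r$.

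Second, when the $n$ rotations are sorted lexicographically, the members of any one equivalence class must occupy a contiguous range, since no string can sit strictly between two identical copies. The sorted list therefore decomposes into $r$ consecutive blocks of length $n/r$, one per class. Inside each block every rotation is literally the same length-$n$ string, so the rotations share a common last character, and the corresponding segment of $\bwt(T)$ is a length-$(n/r)$ repetition of a single character. Concatenating the $r$ segments yields the $r$ length-$(n/r)$ single-character runs claimed by the lemma.

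The one delicate point, and where I expect most of the care to go, is the word ``runs'': the equivalence-class argument directly produces $r$ length-$(n/r)$ constant-character \emph{blocks}, but two adjacent blocks could in principle share their common character and so coalesce into a longer maximal run (for instance with $P = aaabbb$ one computes $\bwt(P) = baabba$, whose $r = 6$ equivalence-class blocks form only $4$ maximal runs). For the use of the lemma in Section~\ref{sec:lb}, however, what actually matters is the existence of this short description of $\bwt(T)$ --- $r$ characters, each implicitly repeated $n/r$ times, so $\bwt(T)$ is encoded in $O(r \log |\Sigma|)$ bits --- and the block argument already delivers that, so no further refinement of the ``run'' count is needed.
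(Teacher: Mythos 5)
Your proof is correct. Note first that the paper itself gives no proof of this lemma at all: it is stated as a known result of Mantaci, Restivo and Sciortino \cite{MRS03} and used as a black box in Theorem~\ref{teo:LBs}, so there is nothing in the paper to match your argument against. What you supply is the standard self-contained argument, and it is sound: shifting by $r$ fixes $T$ because $r$ divides $n$, the set of shifts fixing $T$ is a subgroup of $\mathbb{Z}_n$ whose generator is itself a cyclic period and hence (by minimality) equals $r$, so the rotations fall into exactly $r$ classes of size $n/r$; identical rotations are contiguous in the lexicographic order and contribute identical last characters, giving $r$ constant blocks of length $n/r$. You also made two judgment calls that deserve explicit credit. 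First, you correctly read the lemma through the \emph{rotation-based} BWT rather than the sentinel-terminated, suffix-array-based definition of Section~\ref{sec:notation} --- under that definition a string with a unique last character cannot have a proper period dividing $n$, so the cyclic reading is the only one under which the lemma is non-vacuous, and it is the setting of \cite{MRS03}. Second, your observation that the $r$ blocks need not be \emph{maximal} runs (your $\bwt(aaabbb)=baabba$ example) is exactly right, and your remark that this does not matter is also right: the proof of Theorem~\ref{teo:LBs} only uses that each length-$r$ window of $\bwt(T)$ decomposes into at most $r/(n/r)+1=\os{r}$ constant pieces and hence can be stored in $\os{r}$ bits, which your block decomposition already yields. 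The only thing I would tighten is the phrase ``a short Fine-and-Wilf-style argument'': the subgroup argument you sketch in parentheses is simpler and complete on its own, so state that one and drop the allusion.
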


Lemma~\ref{lem:function} implies that, if the initial contents of some part
of the disk are much more complex than its eventual contents (or vice versa),
then the product of the memory's size and the number of passes must be at
least linear in the initial (resp., eventual) contents' complexity. To see
why, consider that we can compute the initial contents from the eventual
contents (or vice versa) and two memory configurations for each pass;
therefore, the product of the memory's size and the number of passes must be
at least the difference between the complexities. Lemma~\ref{lem:runs}
implies that, if $T$ is periodic, then short substrings of \(\bwt (T)\) are
simple. Combining these ideas in a fairly obvious way gives us our lower
bound.
\begin{theorem} \label{teo:LBs}
In the worst case, we can neither compute nor invert the BWT using only one
disk when the product of the memory's size in bits and the number of passes
is $\os{n}$.
\end{theorem}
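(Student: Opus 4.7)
The plan is to prove the theorem via a Kolmogorov-style counting argument that combines Lemmas~\ref{lem:function} and~\ref{lem:runs}, applied to a family of periodic inputs whose $\bwt$ is forced to be highly structured. For the BWT computation direction, I would work with the family $\{T_w = ww : w \in \{0,1\}^{n/2}\}$. By Lemma~\ref{lem:runs}, whenever $T_w$ has minimum period exactly $n/2$ (which holds for all but an $O(2^{n/4})$ fraction of the $2^{n/2}$ candidates $w$, by a standard estimate on strings with shorter periods), $\bwt(T_w)$ consists of exactly $n/2$ runs of uniform length~$2$. Consequently, any contiguous length-$\ell$ substring of $\bwt(T_w)$ admits a description using only $\ell/2 + O(\log n)$ bits, namely the sequence of characters labelling the $\lceil \ell/2\rceil + 1$ covered runs plus a constant-size offset, with no per-run length information needed since the common run length is known a priori.

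Next, I would pick a disk region $R$ of length $\ell = n/2$ whose initial contents form the first half of the input $T_w$ (and hence equal $w$) and whose eventual contents lie inside $\bwt(T_w)$. Lemma~\ref{lem:function} then permits reconstruction of $w$ from two ingredients: (i) the $2p$ memory configurations at the boundary of $R$ collected across the $p$ passes, totaling $2pM$ bits, and (ii) the eventual contents of $R$, which by the preceding paragraph are encodable in $n/4 + O(\log n)$ bits. Every admissible $w$ therefore admits a description of at most $2pM + n/4 + O(\log n)$ bits; comparing this with the $2^{n/2}(1-o(1))$ admissible candidates yields $2pM + n/4 + O(\log n) \ge n/2 - o(1)$, forcing $pM = \Omega(n)$ and ruling out $pM = o(n)$. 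The inversion direction is symmetric: the disk initially holds $\bwt(T_w)$, any length-$\ell$ substring of which is of complexity at most $\ell/2 + O(\log n)$, and eventually holds $T_w$, so the dual formulation of Lemma~\ref{lem:function} reconstructs $w$ under exactly the same bound.

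The main obstacle is the sharp $\ell/2$ bound in the first step. A naive run-length encoding of a substring of $\bwt(T_w)$ would cost $\log n$ bits per run and yield only the strictly weaker conclusion $pM = \Omega(n/\log n)$; it is precisely the uniformity of the run length guaranteed by Lemma~\ref{lem:runs} that eliminates this logarithmic loss and lets the contrast between the $\Omega(n)$-complex initial contents and the $O(n/4)$-simple eventual contents survive at the linear scale. A secondary housekeeping item is to verify that the fraction of $w$'s for which $T_w$ fails to have minimum period $n/2$ is negligible; this follows by observing that any smaller period $r$ of $T_w$ must divide $n/2$, leaving at most $\sum_{r \mid n/2,\, r < n/2} 2^r = O(2^{n/4})$ exceptional strings, which is exponentially dominated by the $2^{n/2}$ candidates.
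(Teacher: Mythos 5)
Your proposal is correct and rests on exactly the same two pillars as the paper's own proof: Lemma~\ref{lem:function} to reconstruct a disk region's initial contents from the per-pass boundary memory configurations plus its eventual contents, and Lemma~\ref{lem:runs} to make those eventual contents compressible when the input is periodic. The substantive difference is the choice of period. The paper takes the minimum period $r$ to be sublinear in $n$ (and larger than $pM$), so that each length-$r$ region of the output contains only $r^2/n + 1 = o(r)$ runs and is trivially encodable in $o(r)$ bits; it concludes $pM = \Omega(r)$ for every such $r$, hence $pM \neq o(n)$. You instead fix the period at $n/2$ via the family $\{ww\}$ over a binary alphabet, where the relevant output region of length $n/2$ still contains $\Theta(n)$ runs, and you must exploit the uniformity of the run length (no per-run length fields, one bit per run for the character) to push the encoding down to $n/4 + O(\log n)$ bits --- a constant-factor rather than asymptotic saving. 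Your explicit count of the exceptional $w$ with shorter period and the resulting injectivity argument are sound, and they yield the concrete bound $pM \geq n/8 - O(\log n)$, which is marginally sharper than the paper's qualitative conclusion; the price is that the argument is more delicate, since the whole bound lives in the factor-$2$ gap between the $n/2$ bits of $w$ and the $n/4$-bit run encoding, a gap the paper's sublinear-period parameterization never needs to fight for. Your version also makes explicit the counting over a family of inputs that the paper leaves implicit in the phrase ``takes $\Omega(r)$ bits in the worst case.'' Both versions share the same unargued simplifications as the paper (the in-place overwriting assumption and the use of the rotation-based BWT for periodic strings), so these are not gaps relative to the paper's own standard of rigor.
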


\begin{proof}
Suppose $T$ is
periodic with minimum period $r$, where $r$ is sublinear in $n$ but still
larger than the product of the memory's size and the number of passes, and
consider any algorithm $A$ that computes \(\bwt (T)\) using only one disk.
Without loss of generality, we can assume $A$ completely overwrites $T$.
Therefore, by Lemma~\ref{lem:runs}, $A$ replaces each copy of $T$'s repeated
substring, which has length $r$, by a substring of \(\bwt (T)\) consisting of
runs of length \(n / r\) (except possibly for the first and last).  Notice
each new substring consists of at most \(r / (n / r) + 1 = \os{r}\) runs, so
we can store it in $\os{r}$ bits, whereas storing $T$'s repeated substring
takes \(\Omega (r)\) bits in the worst case. Lemma~\ref{lem:function} says we
can compute $T$'s repeated substring from one of these new substrings and two
memory configurations for each pass $A$ makes; it follows that two times the
memory's size times the number of passes must be \(\Omega (r)\) bits in the
worst case. Since $r$ can be any integer-valued function in $\os{n}$, it
follows that we cannot compute the BWT using only one disk when the product
of the memory's size and the number of passes is $\os{n}$.

Now consider any algorithm $A'$ that inverts \(\bwt (T)\) using only one
disk.  Again without loss of generality, we can assume $A'$ completely
overwrites \(\bwt (T)\).  Therefore, by Lemma~\ref{lem:runs}, with each copy
of $T$'s repeated substring, $A$ replaces on the disk a substring of \(\bwt
(T)\) consisting of $\os{r}$ runs.  It follows, by the same arguments as
above, that we cannot invert the BWT using only one disk when the product of
the memory's size and the number of passes is $\os{n}$. \qed
\end{proof}

\bibliographystyle{plain}
\bibliography{compression,webIR}

\begin{thebibliography}{10}

\bibitem{BenderFK06}
M.~Bender, M.~Farach-Colton, and B.~Kuszmaul.
\newblock Cache-oblivious string {B}-trees.
\newblock In {\em ACM PODS}, pages 233--242, 2006.

\bibitem{dataset}
P.~Boldi, B.~Codenotti, M.~Santini, and S.~Vigna.
\newblock Ubicrawler: A scalable fully distributed web crawler.
\newblock {\em Software: Practice \& Experience}, 34(8):711--726, 2004.

\bibitem{CGG+95}
Y.~Chiang, M.~Goodrich, E.~Grove, R.~Tamassia, D.~Vengroff, and J.~Vitter.
\newblock External-memory graph algorithms.
\newblock In {\em ACM-SIAM SODA}, pages 139--149, 1995.

\bibitem{GeomBWT}
Y.-F. Chien, W.-K. Hon, R.~Shah, and J.~S. Vitter.
\newblock Geometric {Burrows-Wheeler} transform: Linking range searching and
  text indexing.
\newblock In {\em IEEE DCC}, 2008.

\bibitem{CraFer02}
A.~Crauser and P.~Ferragina.
\newblock A theoretical and experimental study on the construction of suffix
  arrays in external memory.
\newblock {\em Algorithmica}, 32(1):1--35, 2002.

\bibitem{DKMS08}
R.~Dementiev, J.~K\"arkk\"ainen, J.~Mehnert, and P.~Sanders.
\newblock Better external memory suffix array construction.
\newblock {\em ACM Journal of Experimental Algorithmics}, 12, 2008.

\bibitem{FFM:00}
M.~Farach-Colton, P.~Ferragina, and S.~Muthukrishnan.
\newblock On the sorting-complexity of suffix tree construction.
\newblock {\em Journal of the ACM}, 47(6):987--1011, 2000.

\bibitem{Fer05ssem}
P.~Ferragina.
\newblock String search in external memory: Data structures and algorithms.
\newblock In S.~Aluru, editor, {\em Handbook of Computational Molecular
  Biology}. Chapman and Hall, 2005.

\bibitem{bwtext_tr}
P.~Ferragina, T.~Gagie, and G.~Manzini.
\newblock Space-conscious data indexing and compression in a streaming model.
\newblock Technical Report TR-INF-2008-02-01, Dipartimento di Informatica,
  Universit\`a Piemonte Orientale, {\sf
  http://www.di.unipmn.it/TechnicalReports/TR-INF-2008-02-01-UNIPMN.pdf}, 2008.

\bibitem{FGM06b}
P.~Ferragina, R.~Giancarlo, and G.~Manzini.
\newblock The engineering of a compression boosting library: Theory vs practice
  in {BWT} compression.
\newblock In {\em Proc. 14th European Symposium on Algorithms (ESA)}, Lecture
  Notes in Computer Science vol. 4168, pages 756--767. Springer, 2006.

\bibitem{FGMS05}
P.~Ferragina, R.~Giancarlo, G.~Manzini, and M.~Sciortino.
\newblock Boosting textual compression in optimal linear time.
\newblock {\em Journal of the ACM}, 52:688--713, 2005.

\bibitem{pizzachili_home}
P.~Ferragina and G.~Navarro.
\newblock The {Pizza\&Chili} corpus home page.
\newblock {\tt http://pizzachili.dcc.uchile.cl/} or {\tt
  http://pizzachili.di.unipi.it/}, 2007.

\bibitem{FrMu07}
G.~Franceschini and S.~Muthukrishnan.
\newblock In-place suffix sorting.
\newblock In {\em ICALP}, volume 4596 of {\em Lecture Notes in Computer
  Science}, pages 533--545, 2007.

\bibitem{Gag09}
T.~Gagie.
\newblock On the value of multiple read/write streams for data compression.
\newblock In {\em Proceedings of the 20th Symposium on Combinatorial Pattern
  Matching}, LNCS n.~5577, pages 68--77, 2009.

\bibitem{pat-tree}
G.~H. Gonnet, R.~A. Baeza-Yates, and T.~Snider.
\newblock New indices for text: {PAT} trees and {PAT} arrays.
\newblock In B.~Frakes and R.~A. Baeza-Yates, editors, {\em Information
  Retrieval: Data Structures and Algorithms}, chapter~5, pages 66--82.
  Prentice-Hall, 1992.

\bibitem{GNiwoca07}
R.~Gonz\'alez and G.~Navarro.
\newblock A compressed text index on secondary memory.
\newblock In {\em Proceedings of the 18th International Workshop on
  Combinatorial Algorithms (IWOCA 2007)}, pages 80--91. College Publications,
  UK, 2007.

\bibitem{Hon07}
W.-K. Hon, T.~W. Lam, K.~Sadakane, W.-K. Sung, and S.-M. Yiu.
\newblock A space and time efficient algorithm for constructing compressed
  suffix arrays.
\newblock {\em Algorithmica}, 48(1):23--36, 2007.

\bibitem{Kar07smallbwt}
J.~K\"arkk\"ainen.
\newblock Fast {BWT} in small space by blockwise suffix sorting.
\newblock {\em Theoretical Computer Science}, 387:249--257, 2007.

\bibitem{KSB06}
J.~K\"{a}rkk\"{a}inen, P.~Sanders, and S.~Burkhardt.
\newblock Linear work suffix array construction.
\newblock {\em Journal of the ACM}, 53(6):918--936, 2006.

\bibitem{MRS03}
S.~Mantaci, A.~Restivo, and M.~Sciortino.
\newblock {Burrows-Wheeler} transform and {Sturmian} words.
\newblock {\em Information Processing Letters}, 86(5):241--246, 2003.

\bibitem{MunroP80}
J.~I. Munro and M.~Paterson.
\newblock Selection and sorting with limited storage.
\newblock {\em Theor. Comput. Sci.}, 12:315--323, 1980.

\bibitem{Muthu-survey}
S.~Muthukrishnan.
\newblock {\em Data Streams: Algorithms and Applications}, volume 1:2.
\newblock Foundations and Trends in Theoretical Computer Science, NOW, 2005.

\bibitem{NaPa07}
J.~C. Na and K.~Park.
\newblock Alphabet-independent linear-time construction of compressed suffix
  arrays using $o(n \log n)$-bit working space.
\newblock {\em Theoretical Computer Science}, 386:127--136, 2007.

\bibitem{NM-survey07}
G.~Navarro and V.~M{\"a}kinen.
\newblock Compressed full-text indexes.
\newblock {\em ACM Computing Surveys}, 39(1), 2007.

\bibitem{Ruhl03}
M.~Ruhl.
\newblock {\em Efficient algorithms for new computational models}.
\newblock PhD thesis, Massachusetts Institute of Technology, 2003.

\bibitem{SPMT08}
R.~Sinha, S.~Puglisi, A.~Moffat, and A.~Turpin.
\newblock Improving suffix array locality for fast pattern matching on disk.
\newblock In {\em SIGMOD '08}, pages 661--672. ACM, 2008.

\bibitem{Siren09}
Jouni Sir{\'e}n.
\newblock Compressed suffix arrays for massive data.
\newblock In {\em Proc. 16th Int. Symp. on String Processing and Information
  Retrieval (SPIRE '09)}, pages 63--74. Springer Verlag LNCS n.~5721, 2009.

\bibitem{Vitter01}
J.~Vitter.
\newblock External memory algorithms and data structures: dealing with massive
  data.
\newblock {\em ACM Comput. Surv.}, 33(2):209--271, 2001.

\end{thebibliography}

\end{document}